
\documentclass[10pt,journal,compsoc]{IEEEtran}
%


%

%
\ifCLASSOPTIONcompsoc
  \usepackage[nocompress]{cite}
\else
  \usepackage{cite}
\fi
%

%
\ifCLASSINFOpdf
\else
\fi

\usepackage{cite}

\usepackage{setspace}
\usepackage{amsmath, amsthm, amssymb}
\usepackage{algorithm}
\usepackage[noend]{algpseudocode}
\makeatletter
\def\BState{\State\hskip-\ALG@thistlm}
\makeatother

\usepackage{float}
\usepackage{subfig}
\usepackage{graphicx}
\usepackage{caption}
\usepackage{mathrsfs}

\newtheorem{theorem}{\textbf{Theorem}}

\newtheorem{lemma}{\textbf{Lemma}}

\newtheorem{corollary}{\textbf{Corollary}}

\theoremstyle{definition}
\newtheorem{definition}{\textbf{Definition}}
\newtheorem{property}{\textbf{Property}}

\newtheorem{myrule}{Rule}
\newtheorem{problem}{Problem}
\newtheorem{example}{Example}

\DeclareMathOperator*{\argmax}{arg\,max}
\hyphenation{op-tical net-works semi-conduc-tor}

\begin{document}
%
\title{Adaptive Influence Maximization in Dynamic Social Networks}
%
%
%
%

\author{Guangmo Tong,~\IEEEmembership{Student Member,~IEEE,}
		Weili Wu,
		Shaojie Tang,
        Ding-Zhu Du,~\IEEEmembership{Member,~IEEE,}
        
\IEEEcompsocitemizethanks{\IEEEcompsocthanksitem G. Tong and D.-Z. Du are with the Department of Computer Science Erik
Jonsson School of Engineering and Computer Science The University
of Texas at Dallas 800 W. Campbell Road; MS EC31 Richardson, TX
75080 U.S.A.\protect\\
E-mail: \{guangmo.tong, dzdu\}@utdallas.edu
}
\thanks{}}

%
%

\markboth{Journal of \LaTeX\ Class Files,~Vol.~13, No.~9, September~2014}%
{Shell \MakeLowercase{\textit{et al.}}: Bare Demo of IEEEtran.cls for Computer Society Journals}
%



\IEEEtitleabstractindextext{%
\begin{abstract}
For the purpose of propagating information and ideas through a social network, a seeding strategy aims to find a small set of seed users that are able to maximize the spread of the influence, which is termed as influence maximization problem. Despite a large number of works have studied this problem, the existing seeding strategies are limited to the static social networks. In fact, due to the high speed data transmission and the large population of participants, the diffusion processes in real-world social networks have many aspects of uncertainness. Unfortunately, as shown in the experiments, in such cases the state-of-art seeding strategies are pessimistic as they fails to trace the dynamic changes in a social network. In this paper, we study the strategies selecting seed users in an adaptive manner. We first formally model the Dynamic Independent Cascade model and introduce the concept of adaptive seeding strategy. Then based on the proposed model, we show that a simple greedy adaptive seeding strategy finds an effective solution with a provable performance guarantee. Besides the greedy algorithm an efficient heuristic algorithm is provided in order to meet practical requirements. Extensive experiments have been performed on both the real-world networks and synthetic power-law networks. The results herein demonstrate the superiority of the adaptive seeding strategies to other standard methods.
\end{abstract}

\begin{IEEEkeywords}
Social network influence, adaptive seeding strategy, stochastic submodular maximization.
\end{IEEEkeywords}}

\maketitle

\IEEEdisplaynontitleabstractindextext

%
\IEEEpeerreviewmaketitle

\IEEEraisesectionheading{\section{Introduction}\label{sec:introduction}}

%
%
%
%
\IEEEPARstart{W}{ith} the advance of information science in the last two decades, social networks are becoming important dissemination platforms as they allow efficient interchange of ideas and information. The influence diffusion process in social networks has been studied in many domains e.g. epidemiology, social median and economics. It has been shown that the investigation into the influence diffusion are of great use in many aspects such as designing marketing strategy \cite{mahajan1990new,goldenberg2001using}, analyzing human behavior \cite{bond201261} and rumor blocking \cite{fan2013least}. In order to formulate the diffusion process, a number of models have been studied during the last decade. Two basic operational models, linear threshold (LT) model and independent cascade (IC) model, are proposed by Kempe \textit{et al.} \cite{kempe2003maximizing}. In the Linear Threshold Model, a user will adopt a new idea if the influence from its neighbors has reached a certain threshold, while in the Independent Cascade Model an adopter has a certain probability to convince each of its neighbors. Based on those two models various models have been developed and studied.

In the topic of influence diffusion, an important issue is that how to propagate information through a social network effectively and efficiently.  As an example, in order to advertise new products, a company would like to offer free samples to a set of initial users who will potentially introduce the new product to their friends. Due to the expense issue, only a limited number of samples are available and thus we have a budget of the seed users. A natural problem is that how to select a good set of seed users that is able to maximize the number of customers who finally adopt the target product. This problem is named as influence maximization problem first proposed in \cite{domingos2001mining} in literature. 

A large body of related works have been performed concerning the influence maximization problem, but the state-of-art technique may not deal with many real cases in effect. A drawback of the existing diffusion models is that they fail to take account of some uncertain natures of a real-world social network. Such uncertainness can be viewed from the following three aspects. In a real-world social network, the seed users are not assured to be successfully activated. In the example of selling a new product, the advertising would be stuck if the free samples do not satisfy the initial users. Second, the information is not guaranteed to be delivered from one user to the other and thus the diffusion itself is a probabilistic process. Furthermore, the topology of a social network is not always static in real cases due to the frequent variation of the degree of the relationship between users. In the sense of an online social network, such as Facebook, Twitter or Flicker, topology changes are incurred by the increasing number of the common friends between a pair of users. In this paper, we study the influence maximization problem in the social networks with the above characteristics. By extending the classic IC model, we herein develop the Dynamic Independent Cascade (DIC) model which is able to capture the dynamic aspects of real social networks. In the classic IC model a seed node is guaranteed to be activated after selected and the relationship between two users is simply represented by a fixed probability, while the seed nodes in our DIC model could fail to be activated with a certain probability and the propagation probability between two users follows a certain distribution which reflects the change of topology of a social network. 

Based on the DIC model, we further consider how to design a seeding strategy to find effective seed nodes. For the classic IC model, Kempe \textit{et al.} \cite{kempe2003maximizing} propose a simple greedy algorithm with an approximation ratio of $(1-1/e)$ and Chen \textit{et al.} \cite{chen2010scalable} present an efficient heuristic seeding approach to handle large-scale social networks. The existing approaches always make seeding selection in a static manner (i.e., determining a seed set before the process of spread), which renders them inapplicable to the DIC model. As mentioned earlier the seed users in the DIC model are not guaranteed to be activated. In this setting, an arising issue is that we can seed a user for more than one time if it is not successfully activated in the past rounds. One can see that it is worthy to take more effort to activate a powerful user as he or she may generate considerable influence to a social network. However, a static seeding algorithm cannot take such a case into account. Besides, to determine a seed set, the prior algorithms require the propagation probability between users, but in the DIC model such a probability is a random variable and we can only expect a distribution over it. Admittedly, we could take advantage of its expected value and then apply the prior approach. But such a method would be pessimistic as it fails to trace the dynamic topology of a real-world social network. In this paper, we first provide a simple adaptive seeding strategy that is able to handle the dynamic aspects of real-world social networks, and then design a heuristic algorithm for better scalability.

\subsection{Related Work and Technique}
\label{subsec:relate}
Domingos \textit{et al.} \cite{domingos2001mining} are among the first who study the influential nodes in viral marketing. In the seminal work \cite{kempe2003maximizing}, Kempe \textit{et al.} formulate the influence maximization problem from the view of combinatorial optimization, and provide a greedy algorithm with an approximation ratio of $(1-1/e)$. Efficient heuristic influence maximization algorithms have been studied in many works \cite{chen2010scalable}, \cite{chen2009efficient} and \cite{chen2009approximability}. Long \textit{et at.} \cite{long2011minimizing} further study this problem from the perspective of minimization. Du \textit{et al.} \cite{du2013scalable} and Rodriguez \textit{et al.} \cite{rodriguez2012influence} propose the continuous diffusion model and study the influence maximization problem in this setting. All the above works aim to determine an effective seed set before the diffusion process and focus on the network with a static topology.  

In order to learn a provable performance guarantee, submodular functions play an important role in the prior works. Kempe \textit{et al.} \cite{kempe2003maximizing} show that the expected number of active nodes is a monotonically increasing submodular function over the seed set, and therefore, by the celebrated result in \cite{nemhauser1978analysis}, a simple greedy algorithm yields an $(1-1/e)$ approximation. However, as shown later in Sec. \ref{sec:greedy}, such a technique cannot be directly applied to the adaptive seeding problem. On the one hand the seed nodes are unknown before the diffusion process as they are adaptively selected; on the other hand the value of the objective function over a certain seed set cannot be explicitly observed.

Adaptive seeding strategy is a stochastic optimization framework and a natural extend to original seeding approach in \cite{kempe2003maximizing}. Part of the analysis in this paper is based on the stochastic submodular maximization. Asadpour \textit{et al.} \cite{asadpour2008stochastic} present the analysis of the stochastic submodular maximization problem where the objective function is defined on the power set of a set of independent random variables. Golovin \textit{et al.} \cite{golovin2010adaptive} further study this problem with the concept of adaptive submodularity. Although the above works are only applicable to special cases of the adaptive influence maximization problem, they provide a clue that the greedy algorithm in its adaptive version is still able to achieve a provable performance guarantee. 
In a recent work,  Seeman \textit{et al.} \cite{seeman2013adaptive} consider the adaptive approach to a variant influence maximization problem where the seed nodes are constrained in a certain set and the influence can spread for only one round, and thus has a different setting from that of this paper.

\subsection{Contribution}
\label{subsec:contribution}
The contributions of this paper are summarized as follows. We propose the DIC model that is able to capture the dynamic aspects of real-world social networks. In order to provide a formal description of an adaptive seeding strategy we introduce the concept of seeding pattern. The main contribution of this paper is an adaptive hill-climbing strategy with a provable performance guarantee in the DIC model. We further design an efficient heuristic adaptive seeding strategy by narrowing the candidate seed sets before the seeding process. The conducted experiments demonstrate the superiority of the proposed adaptive seeding strategies to the original seeding approaches in dynamic social networks.
 
The rest of the paper is organized as follow. The proposed DIC model and the adaptive seeding strategy are formulated in Sec. \ref{sec:model}. The analysis of the greedy adaptive strategy is shown in Sec. \ref{sec:greedy} and the heuristic strategy is proposed in Sec. \ref{sec:heuristic}. In Sec. \ref{sec:exp}, we show the experimental results. Sec. \ref{sec:conclusion} concludes.

\section{Problem setting}
\label{sec:model}
\subsection{DIC Model}
\label{subsec:dic_model}
A social network is modeled as a directed graph where nodes and edges denote the individuals and social ties, respectively. In order to spread an idea or advertising a new product in a social network, some seed nodes are chosen to be activated (e.g., by giving payments or offering free samples) to trigger the spread of influence. Following the notations in \cite{kempe2003maximizing} we speak of each node as being either \textit{active} or \textit{inactive}. A node can be activated either by its neighbor or as a seed node.

In the DIC model, associated with each node $u$ there is a random variable $X_u$ following a Bernoulli distribution $f_u$, where $X_u=1$ indicates node $u$ is successfully activated as a seed node.  For the relationship between nodes, an active node $u$ has one chance to activate its inactive neighbor $v$ via edge $(u,v)$ with a probability of $X_{(u,v)}$ which is a random variable. With the activated seed nodes diffusion process goes round by round. Without the loss of generality, for each edge $e$, we assume $X_{e}$ follows a certain discrete distribution $f_{e}$ with a domain $D_{e}$, and let $d_e^i \in [0,1]$ be the $i^{th}$ value in $D_{e}$. In this paper, we do not enforce any specific distribution of $X_e$ \footnote{We may assume an exponential distribution as a social network always exhibits a power-law pattern where the influential users are rare \cite{clauset2009power}.}. In the DIC model, for an edge $e=(u,v)$, the value of $X_e$ remains unknown until one of the neighbors of $u$ is active. This is because in practice an industry institute may only trace the interested influence and the real-time state of the rest of the network is unavailable. We denote an instance of DIC network by $G=(V,E,F_V,F_E)$, where $F_V=\{f_u |u \in V\}$ and $F_E=\{f_e | e \in E\}$ are the sets of the distributions of $X_u$ and $X_e$, respectively. Let $N$ be the number of the nodes in $V$. Due to the expense of activating seed nodes, we have a budget $B (B \leq N)$ for the seed set. The notations that are frequently used later in this paper are listed in Table \ref{table:symbol} and the rest of the notations in Table \ref{table:symbol} will be introduced later.  

\begin{table}[t]

\centering
{\begin{tabular}{ |p{2.1cm} || p{5.8cm} |}
\hline 
Symbol& Definition \\
\hline 
$G$& Instance of DIC network.    \\ 
\hline
$G_1$& Example DIC network in Example \ref{exp:intro}.    \\ 
\hline
$B$& Budget of seed set.    \\ 
\hline
$D_e$& Domain of the propagation probability of edge $e$.  \\ 
\hline
$d_e^{i}$& The $i^{th}$ value in $D_e$.  \\ 
\hline
$\text{Prob}[X_u=1]$ & The probability that $X_u$ can be activated as a seed node when selected. \\
\hline
$A$&  Seed pattern.\\ 
\hline 
$A_0$&  Special seed pattern define in Def. \ref{def:B^0}.\\ 
\hline 
$A^{*}$&  Special seed pattern define in Def. \ref{def:B^*}.\\ 
\hline 
$S_A^{G}$&  Seeding strategy of pattern $A$ on $G$\\ 
\hline
$OPT_A^{G}$&  Optimal seeding strategy of pattern $A$ on $G$\\ 
\hline
$c$-$G$&  Auxiliary graph of network $G$\\ 
\hline 
$x$&  Full realization\\ 
\hline 
$y$&  Partial realization\\ 
\hline 
$\epsilon$&  Empty realization\\ 
\hline 
\end{tabular}}
\caption{Notations. }
\label{table:symbol}
\end{table}

\subsection{Adaptive Seeding Strategy}
\label{subsec:adaptive}
Basically, to design an adaptive seeding strategy we consider two problems: (1) how many budgets should we use in each seeding step and (2) which nodes to select.  We employ the following concepts to formulate those problems.

Assuming that the seed nodes are only selected between two spread rounds, we denote the seeding step between round $i-1$ and round $i$ as the $i^{th}$ seeding step, and the first seeding step is executed before the process of spread. We assume that we need one round to activate the seed nodes selected in each seeding step. In this paper, we consistently use ``step" for seeding process and ``round" for diffusion process.

\begin{definition}
\label{def:pattern}
A \textit{seeding pattern} $A=(a_1,...,a_N)$ is a sequence of non-negative integers, implying that we seed $a_i$ nodes in the $i^{th}$ seeding step. We will later show that we have at most $N$ seeding steps. Due to the budget constraint, $\sum a_i \leq B$. Note that it reduces to the non-adaptive seeding if $A=(B)$. Corresponding to a seeding pattern $A=(a_1,...,a_N)$, a \textit{seeding strategy} $S_A=(s_1,...,s_N)$ of $A$ is a sequence of node-sets where $|s_i|=a_i$ and $s_i$ is the node-set seeded in the $i^{th}$seeding step. That $a_i=0$ implies that we do not seed any node in the $i^{th}$seeding step and thus $s_i=\emptyset$. 
\end{definition}

In the above setting, both the seeding pattern and seeding strategy can be adaptively constructed, i.e., $a_i$ and $s_i$ may depend on the outcomes of the past rounds. For a specific DIC network $G$, we use $S_A^{G}$ to denote a seeding strategy of pattern $A$ on $G$. Since DIC model is a probabilistic model, the objective function herein is the expected number of the final active nodes when there is no node can be further activated and no budget left. We denote the expected number of active nodes in $G$ under a seeding strategy $S_A^{G}$ by $E[S_A^{G}]$. 

\begin{definition}
\label{def:null_round}
Given a strategy $S_A^{G}$ on a DIC network $G$, if $s_i=\emptyset$ but there does not exist any edge $(u,v)$ such that $u$ is activated, either by its neighbors or as a seed node, in the $(i-1)^{th}$ round, we say that $S_A^{G}$ waits for a \textit{null round}. It can be easily seen that waiting for a null round has no impact on the process of spread or the effect of the  strategy. Unless otherwise stated, we assume that any strategy will not wait for one or more null rounds. Therefore, we have at most $N$ seeding steps and $s_1 \neq \emptyset$ for any strategy $S_A^{G}=(s_1,...,s_N)$. For the convenience of analysis, we require that any strategy $S_A^G$ will not select an active node as a seed node.
\end{definition}

Two natural patterns $A_0$ and $A^{*}$ are defined as follows.

\begin{definition}{}
\label{def:B^0}
Let $A_0=\{a_1,...,a_N\}$ where $a_i=1$ for $1 \leq i \leq B$ and $a_i=0$ for $i > B$. Informally, under pattern $A_0$ we successively seed one node in each step until the budget is used up. 
\end{definition}

\begin{definition}{}
\label{def:B^*}
Another pattern $A^{*}$ is adaptively constructed as follows. In pattern $A^{*}$, we seed one node at a time and wait until no node can be further activated before seeding the next node. Thus, we seed one node in the first step and the rest of seeding pattern will be constructed adaptively.
\end{definition}

Note that given a pattern $A$ there exists many strategies of $A$. We use $OPT_A^{G}$ to denote the optimal adaptive strategy of pattern $A$ on a given DIC network $G$ with respect to the expected number of active nodes. 

The core problem considered in this paper is defined as follows.

\begin{problem}{Adaptive Influence Maximization (AIM).}
Under the budget constraint, for any DIC network $G$, find a pattern $A$ and a strategy $S_A^{G}$ of $A$ on $G$ such that $E[S_A^G]$ is maximized.
\end{problem}

\subsection{An Example}
We employ the following example to illustrate the DIC model and the concept of seeding pattern.
\begin{example}
\label{exp:intro}
Consider an example DIC network $G_1=(V,E,F_V,F_E)$ with six nodes and five edges, as shown in Fig. \ref{fig:example}, where $f_{v}(1)=0.5$ for each $v \in V$, and $D_{e}=\{0.4,0.8\}$ with $f_{e}(0.4)=0.8$ for each $e \in E$. In this example, each node can be activated with a probability of 0.5 when selected as a seed node, and the propagation probability between two connected nodes could be $0.4$ or $0.8$ with probabilities $0.8$ and $0.2$, respectively. We set the budget $B$ to be three. Suppose a certain seeding strategy $S_{A_1}^{G_1}$ produces a sequence of seed sets as $(\{v_3\},\{v_3\},\emptyset,\{v_1\})$ of pattern $A_1=(1,1,0,1)$. In this concrete seeding process, $S_{A_1}^{G_1}$ seeds $v_3$ twice respectively in step 1 and 2, which implies it fails to activate $v_3$ in the first time. Such a strategy may depend on the outcomes of the past rounds or the propagation probability observed in each step. 
\end{example}
\begin{figure}[t]
\begin{center}
\includegraphics[width=3.5in]{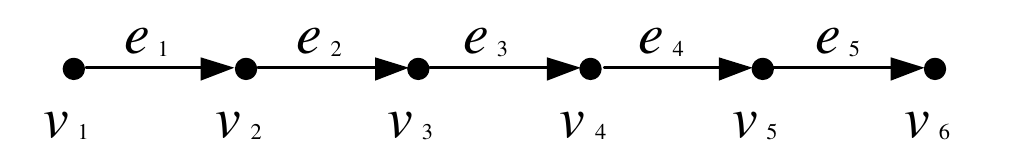} 
\end{center} 
\caption{Example network $G_1$.}
\label{fig:example}
\vspace{-4mm}
\end{figure}

\section{Greedy Algorithm}
\label{sec:greedy}
In this section, we show the main result of this paper. The seed selection rule of the greedy algorithm is shown as follows.
\begin{myrule}
\label{rule:greedy}
In each seeding step, we select the node that is able to maximize the marginal profit conditioned on the observed events.
\end{myrule}

Note that in each step we can observe the followings: (1) the outcome of the past rounds; (2) the propagation probabilities between the active nodes and their neighbors. We can see that Rule \ref{rule:greedy} can be applied to any pattern. For a pattern $A$ and a DIC network $G$, we use $\overline{S}_{A}^{G}$ to denote the seeding strategy following Rule \ref{rule:greedy}. Our analysis consists of three steps. First, we propose a transformation approach which finds an explicit expression of the expected number of the active nodes. Then, we prove that $A^{*}$ is the optimal pattern for any DIC network $G$, i.e., for any pattern $A^{'}$, $E[OPT_{A^{*}}^G] \geq E[OPT_{A^{'}}^G]$ . Finally, we show that $\overline{S}_{A^{*}}^{G}$ is a $(1-1/e)$-approximation under pattern $A^{*}$, i.e., 
\begin{equation}
E[\overline{S}_{A^{*}}^{G}] \geq  (1-1/e) \cdot E[OPT_{A^{*}}^G].
\end{equation}
\subsection{Transformation}
In the classic IC model, a concrete network is a graph where each edge $(u,v)$ is specified to be either \textit{live} or \textit{not live}. If edge $(u,v)$ is live then it means $u$ could successfully activate $v$. Informally speaking, all the uncertainties are determinate in a concrete network. In a concrete network, the active nodes are those which are connected to a seed node via a path of live edges, and the number of the active nodes in a concrete network is a submodular function over the seed set \cite{kempe2003maximizing}. Unfortunately, this approach cannot be directly applied to the analysis of our DIC model because several cases in the DIC model cannot be represented by a graph with a structure identical to that of the original DIC network. For example, how to represent the case that we seed a node more than once, and how to depict the feature that each propagation probability follows a distribution instead of being a single value? To address such scenarios, we transfer the original network to an auxiliary graph where the active nodes can be explicit observed given a seed set.

Given a DIC network $G=(V,E,F_V,F_E)$ where $V=\{v_1,...,v_N\}$, we construct an auxiliary graph $c$-$G=(V_c,E_c)$, as follows. $V_c$ consists  of $N \cdot B+N$ nodes and is partitioned into $N+1$ subsets denoted by $V_c^i$ ($0 \leq i \leq N$), where $|V_c^0|=N$ and $|V_c^i|=B$ ($i >0$). Let $V_c^0=\{v_{0,1},...,v_{0,N}\}$ and $V_c^i=\{v_{i,1},...,v_{i,B}\}$ ($i>0$). Nodes in $V_c^0$ are corresponding to the nodes in $G$ and nodes in $V_c^i$ ($i>0$) are used to represent the multiple seedings on $v_i$ in $G$. $E_c$ consists of two parts $E_c^1$ and $E_c^2$, defined as follows. For $i>0$ and $1 \leq j \leq B$, we have an edge $(v_{i,j},v_{0,i})$ for each pair of $v_{i,j}$ and $v_{0,i}$, and for each pair of nodes $v_{0,i}$ and $v_{0,j}$ in $V_0$ ($1 \leq i \neq j \leq N$), we have $|D_{(v_i,v_j)}|$ edges denoted by $e_{i,j}^k$ ($1\leq k \leq |D_{(v_i,v_j)}|$) connecting $v_{0,i}$ to $v_{0,j}$. Let $E_c^1$ be the set of edges between $V_c^0$ and $V_c^i$ ($i>0$) and $E_c^2$ be the set of edges within $V_0^i$. Recall that  $D_{(v_i,v_j)}$ is the domain of $f_{(v_i,v_j)}$ which is the distribution of the propagation probability of edge $(v_i,v_j)$ in $G$.

The auxiliary graph $c$-$G_1$ of $G_1$ in Example \ref{exp:intro} is shown as Fig. \ref{fig:tran_example}. Further explanations are presented in the caption. 

\begin{figure}[t]
\begin{center}
\includegraphics[width=3.5in]{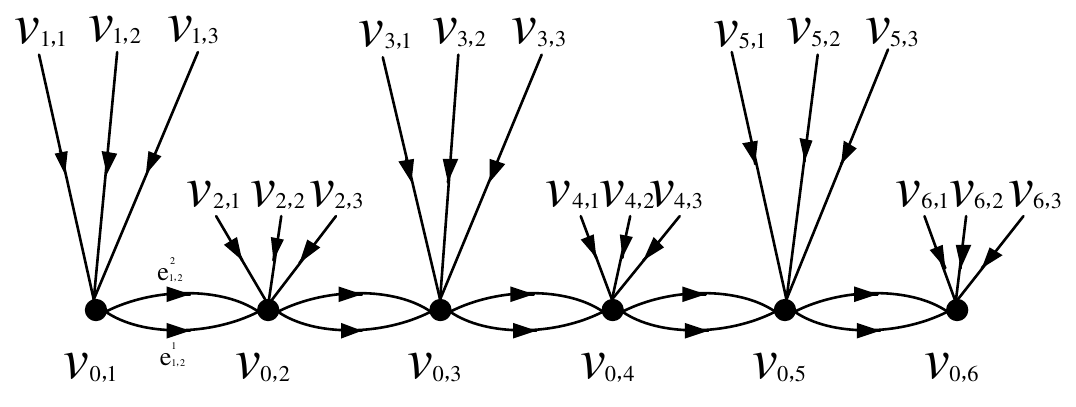} 
\end{center} 
\caption{\small Auxiliary graph $c$-$G_1$. In Example \ref{exp:intro}, we have a budget of three and the propagation probability of each edge in $G_1$ follows a distribution on a domain of two values. Therefore, we have three nodes $V_{1,1}$, $V_{1,2}$ and $V_{1,3}$ connected to $V_{0,1}$, and two edges $e_{1,2}^1$ and $e_{1,2}^2$ connecting $V_{0,1}$ and $V_{0,2}$.}
\label{fig:tran_example}
\vspace{-4mm}
\end{figure}

Now we show that given a seeding strategy how to observe the active nodes via $c$-$G$. Following the notations in \cite{asadpour2008stochastic}, we introduce the states of edges and the concept of realization.
\begin{definition}
A \textit{full realization} (\textit{f-realization}) $x$ of $c$-$G$ is a mapping from edges in $c$-$G$ to some states, where each edge in $E_c^{1}$ is mapped to \{\textit{live}, \textit{not live}\} and each edge in $E_c^2$ is mapped to \{\textit{selected-live}, \textit{selected-not live}, \textit{not selected}\}. In an f-realization, only one edge from $v_{0,i}$ to $v_{0,j}$ can be mapped to \textit{selected-live} or \textit{selected-not live}. 
\end{definition}

\begin{definition}
A \textit{partial realization} (\textit{p-realization}) $y$ of $c$-$G$ is a mapping from edges to states, where each edge in $E_c^{1}$ is mapped to \{\textit{live}, \textit{not live}, \textit{undetermined}\}, and each edge in $E_c^2$ is mapped to \{\textit{selected live}, \textit{selected-not live}, \textit{not selected}, \textit{selected-undetermined}, \textit{undetermined}\}. In a p-realization, if one edge from  $v_{0,i}$ to $v_{0,j}$ is \textit{undetermined} then all the edges from $v_{0,i}$ to $v_{0,j}$ must be \textit{undetermined}; if one edge from  $v_{0,i}$ to $v_{0,j}$ is either \textit{selected-live}, \textit{selected-live} or \textit{selected-undetermined}, then others edges from  $v_{0,i}$ to $v_{0,j}$ must be \textit{not selected}.
\end{definition}

\begin{table*}[t]
{\begin{tabular}{ |p{2.5cm} || p{13cm} |}

\hline 
\textit{live} &  $v_i$ in $G$ is successfully activated when selected as a seed node in the $j^{th}$ time. \\ 
\hline
\textit{not live} &  $v_i$ in $G$ fails to be activated when selected as a seed node in the $j^{th}$ time. \\ 
\hline
\textit{undetermined} &  The result of the $j^{th}$ seeding on $v_i$ is unknown. \\ 
\hline 
\end{tabular}}
\caption{States of edge $(v_{i,j},v_{0,i})$ in $E_c^1$, for $i>0$ and $1 \leq j \leq B$. }
\label{table:state1}
\end{table*}

\begin{table*}[t]

{\begin{tabular}{ | p{2.5cm} || p{13cm} |}

\hline 
\textit{selected-live} & The propagation probability between $v_i$ and $v_j$ is $d_{(v_i,v_j)}^k$ and $v_i$ activates $v_j$. \\ 
\hline

\textit{selected-not live} & The propagation probability between $v_i$ and $v_j$ is $d_{(v_i,v_j)}^k$ and $v_i$ fails to activate $v_j$. \\ 
\hline
\textit{selected-undetermined} & The propagation probability between $v_i$ and $v_j$ is $d_{(v_i,v_j)}^k$ and the result of the activation from $v_i$ to $v_j$ is unknown.\\
\hline
\textit{not selected} & The propagation probability between $v_i$ and $v_j$ is not $d_{(v_i,v_j)}^k$. \\
\hline
\textit{undetermined} & The propagation probability between $v_i$ and $v_j$ is unknown\\
\hline
\end{tabular}}

\caption{States of $e_{i,j}^k$ in $E_c^2$, for $1 \leq i \leq N$, $1 \leq j \leq N$ and $1 \leq k \leq |D_{(v_i,v_j)}|$.}
\label{table:state2}
\end{table*}

\begin{figure*}[t]
\begin{center}
\includegraphics[width=5in]{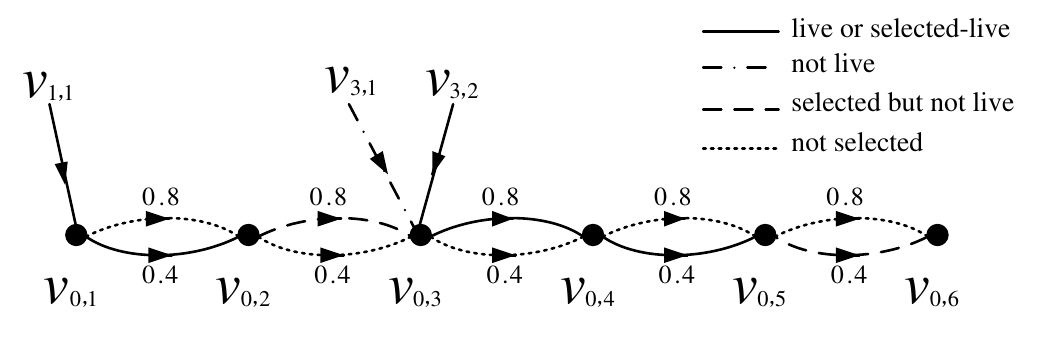} 
\end{center} 
\caption{An example f-realization $x_1$ of $c$-$G_1$. The number aligned with an edge is the propagation probability it stands for. In this concrete case, the seed nodes are $v_1$ and $v_3$, and the active nodes in $G$ are $v_1$, $v_3$, $v_4$ and $v_5$}
\label{fig:assign_example}
\end{figure*}

The explanations of the states are listed in Tables \ref{table:state1} and \ref{table:state2}. Each edge together with its state in $c$-$G$ corresponds to an event in the diffusion process of the original network $G$. We can see that an f-realization is a determinate case of the diffusion process and a p-realization is an intermediate state where the events are partially determined. For a seeding strategy $S_A^{G}$, the seed nodes selected by $S_A^{G}$ are determined only if an f-realization is specified. We use $S_A^{G^{x}}$ to denote the sequence of seed sets selected by $S_A^{G}$ under f-realization $x$.

For an f-realization $x$ and a p-realization $y$, let Prob[$x$] (resp. Prob[$y$]) be the probability with which $x$ (resp. $y$) happens and Prob[$x|y$] be the probability that $x$ happens conditioned on $y$.

\begin{definition}
An f-realization $x$ is \textit{compatible} to a p-realization $y$ if $x$ can be obtained from $y$ by changing the states of some edges in $y$ from \{\textit{undetermined}, \textit{selected-undetermined}\} into \{\textit{selected-live}, \textit{selected-not live}, \textit{not selected}\}. 
\end{definition}

Informally, $x$ is compatible to $y$ implies $x$ is a possible successive state of $y$ in the diffusion process. Similarly, we have the compatibility relationship between two p-realizations. Let $\epsilon$ be the empty realization where all the edges are in the \textit{undetermined} state. For a DIC network $G$, we denote the set of the f-realizations compatible to a p-realization $y$ by $C^G(y)$. 

For each strategy $S_A^{G}=(s_1,...,s_N)$ on $G=(V,E,F_V,F_E)$,  we have a corresponding seed set $V^{'} \subseteq \bigcup_{i>0}V_c^i$ in $c$-$G$,  constructed as follows. If $v_i$ in $G$ is selected by $S_A^{G}$ for $k$ times, then we add $v_{i,1}$,..., $v_{i,k}$ in $c$-$G$ to $V^{'}$. By this setting, given an f-realization $x$ of $c$-$G$, the number of active nodes under $S_A^{G}$ in $G$ is the number of the nodes in $V_c^0$ that are connected to a node in $V^{'}$ via \textit{live} edges in $c$-$G$. In the sense of Example. \ref{exp:intro}, an example f-realization $x_1$ with strategy $(\{v_3\},\{v_3\},\emptyset,\{v_1\})$ is illustrate in Fig. \ref{fig:assign_example}.

For an f-realization $x$, let $Node(S_A^{G^x})$ be the union of the corresponding seed sets produced by $S_A^{G^x}$ in $c$-$G$ in $x$. For a node-set $V^{'} \subseteq \bigcup_{i>0}V_c^i$, let $N_x^G(V^{'})$ be the number of active nodes in $x$ with seed set $V^{'}$. Therefore,
\begin{equation}
\label{eq:e_def}
E[S_A^G]=\sum_{x \in C^G(\epsilon)} \text{Prob}[x] \cdot N_x^G(Node(S_A^{G^x}))
\end{equation} 

$N_x^G(.)$ has the following important properties.

\begin{property}
\label{property:nondecreasing}
If $V_1 \subseteq V_2$, then $N_x^G(V_1) \leq N_x^G(V_2)$.  
\end{property}

\begin{property}
\label{property:submodular}
For two node-subsets $V_1$ and $V_2$ of $\bigcup_{i>0}V_i$, and a node $v^{'} \in \bigcup_{i>0}V_i$, where $V_1 \subseteq V_2$, $v^{'} \notin V_2$, we have 
\begin{eqnarray*}
N_x^G(V_2 \cup \{v^{'}\})- N_x^G(V_2 ) \leq N_x^G(V_1 \cup \{v^{'}\})-N_x^G(V_1 ). 
\end{eqnarray*}
\begin{proof}
This proof is similar to that of Theorem 2.2 in \cite{kempe2003maximizing}. The only difference is that, in our case, the seed nodes and active nodes are constrained in $\bigcup_{i>0}V_i$ and $V_0$, respectively.

\end{proof}
\end{property}




\begin{table*}[t]

\centering
{\begin{tabular}{ |p{0.6cm}  |p{1.2cm} |p{2.6cm} |p{2.6cm} |p{1.2cm} | p{2.6cm} |p{2.6cm} |}
\hline 
Step & $S_{A_1}^{G_1}$ & Diffusion process under $S_{A_1}^{G_1}$& Outcomes under $S_{A_1}^{G_1}$ & $S_{A^{*}}^{G_x}$  & Diffusion process under $S_{A^{*}}^{G_x}$& Outcomes under $S_{A^{*}}^{G_x}$ \\ 
\hline 
1 	& seeds $v_3$ &    & $v_3$ fails to be activated 
	& seeds $v_3$ &    & $v_3$ fails to be activated\\ 
\hline
2 	& seeds $v_3$ &    & $v_3$ is activated 
	& seeds $v_3$ &    & $v_3$ is activated\\ 
\hline
3 	&   &  $v_3$ activates $v_4$ & $v_4$ is activated 
	&   &  $v_3$ activates $v_4$ & $v_4$ is activated\\ 
\hline 
4 	& seeds $v_1$ 	&  $v_4$ activates $v_5$ & $v_5$ is activated; \newline $v_1$ is activated
	&   			&  $v_4$ activates $v_5$ & $v_5$ is activated\\ 
\hline
5 	&   			&  $v_1$ activates $v_2$ ; \newline $v_5$ activates $v_6$& $v_2$ is activated; \newline $v_6$ fails to be activated
	&  		 	&  $v_5$ activates $v_6$ & $v_6$ fails to be activated\\ 
\hline 
6 	&   &   &  
	& seed $v_1$ &    & $v_1$ is activated\\ 
\hline
7 	&   &   &  
	&   &  $v_1$ activates $v_2$ & $v_2$ is activated\\ 
\hline
\end{tabular}}
\caption{Seeding processes of $S_{A^{'}}^{G_1^{x_1}}$ and $S_{A^{*}}^{G_1^{x_1}}$. }
\label{table:process}
\end{table*}

\subsection{Optimal Pattern}
\label{subsec:opt_pattern}
As introduced in Sec. \ref{subsec:dic_model}, a seeding pattern identifies how many budgets should we consume in each step. Now, we show that $A^{*}$ is the optimal pattern.

\begin{lemma}
\label{lemma:result_1}
For any DIC network $G$, suppose $A^{'}$ is an arbitrary seeding pattern and $S_{A^{'}}^G$ is a known seeding strategy of $A^{'}$ on $G$ . There exist a seeding strategy $S_{A^{*}}^G$ of $A^{*}$ on $G$ such that $E[S_{A^{*}}^G]=E[S_{A^{'}}^G]$. 
\end{lemma}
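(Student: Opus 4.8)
The plan is to build, from the given strategy $S_{A'}^G$, a strategy $S_{A^*}^G$ of pattern $A^*$ that induces the same active set in the auxiliary graph $c$-$G$ under every full realization, and then to invoke \eqref{eq:e_def} to conclude equality of the expected spreads. The key enabling fact comes from the transformation together with Property~\ref{property:nondecreasing} and Property~\ref{property:submodular}: for a \emph{fixed} realization $x$, the active count $N_x^G(Node(S_A^{G^x}))$ depends only on the \emph{set} of seeded nodes of $c$-$G$ and on $x$, and not at all on the order or on the timing of seedings relative to the ongoing diffusion. Hence it suffices to exhibit an $A^*$-strategy whose seed set in $c$-$G$ yields, for each $x$, the same set of connected nodes of $V_c^0$ as $S_{A'}^G$ does.

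First I would fix, for each realization, an ordering $w_1,w_2,\dots$ of the at most $B$ nodes that $S_{A'}^G$ seeds, consistent with its steps (breaking ties inside a step arbitrarily). I would then let $S_{A^*}^G$ process these seedings one at a time: it seeds $w_{k+1}$ and, as pattern $A^*$ demands, waits until no node can be further activated before proceeding, \emph{skipping} any $w_{k+1}$ whose corresponding $v_{0,i}$ is already active in its own world. Since an edge of $E_c^1$ incident to a seeding copy $v_{i,j}$ points only to $v_{0,i}$, omitting a seeding whose target is already active can only fail to re-reach an already-reachable node; by Property~\ref{property:nondecreasing} this leaves $N_x^G$ unchanged. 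Thus, on every realization $x$, the active set produced by $S_{A^*}^G$ equals that produced by $S_{A'}^G$, provided the two strategies agree on the non-redundant nodes they seed.

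The heart of the argument, and the step I expect to be the main obstacle, is showing that $S_{A^*}^G$ can faithfully reproduce the \emph{adaptive} choices of $S_{A'}^G$ despite observing the diffusion at different moments. I would argue by induction on $k$ that, at the point where $S_{A^*}^G$ makes the decision corresponding to the $(k+1)$-th seeding of $S_{A'}^G$, the active set in the $A^*$-world contains the active set that $S_{A'}^G$ has observed at its own $(k+1)$-th decision. This holds because $S_{A^*}^G$ lets each of the first $k$ seeds diffuse \emph{fully}, whereas $S_{A'}^G$ may still be mid-diffusion, and on a fixed realization full diffusion only enlarges the reachable set (skipped seeds being already active do not shrink it). Because an edge state of $c$-$G$ is revealed exactly when it becomes incident to an active node, $S_{A^*}^G$ has therefore observed every edge state available to $S_{A'}^G$, and these states agree since they are read off the same realization. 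Consequently $S_{A^*}^G$ can run an internal copy of $S_{A'}^G$: it reconstructs the partial realization $S_{A'}^G$ would have seen, feeds it to the copy, and seeds the node the copy prescribes (or skips it when already active). This simultaneously closes the induction and guarantees the seed sets induce identical active sets in $V_c^0$.

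Finally, having established $N_x^G(Node(S_{A^*}^{G^x})) = N_x^G(Node(S_{A'}^{G^x}))$ for every $x \in C^G(\epsilon)$, and noting that $S_{A^*}^G$ respects the budget (it seeds a subset of the at most $B$ nodes used by $S_{A'}^G$) and conforms to pattern $A^*$ by construction, I would apply \eqref{eq:e_def} termwise to obtain $E[S_{A^*}^G] = E[S_{A'}^G]$, which completes the proof.
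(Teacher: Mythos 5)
Your overall plan is the paper's own: simulate $S_{A'}^G$ one seed at a time under pattern $A^*$, argue by induction that the $A^*$-world has always observed at least as much as the $A'$-world (which is exactly where the no-null-round convention and the "wait until nothing more can be activated" clause of $A^*$ are used), conclude that the induced seed sets give the same value of $N_x^G$ for every f-realization $x$, and finish with Eq.~(\ref{eq:e_def}). That core is sound and matches Lemma~\ref{lemma:result_1}'s proof in the paper.

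The one place you deviate is the \emph{skip} rule, and it opens a genuine gap. When the internal copy of $S_{A'}^G$ prescribes a node $v_i$ that is already active in your $A^*$-world, you decline to seed it and argue via Property~\ref{property:nondecreasing} that the active count is unaffected. That is true for the count at that step, but it destroys the informational invariant your induction relies on: the state of the $E_c^1$ edge $(v_{i,j},v_{0,i})$ --- i.e.\ whether that seeding attempt would have succeeded --- is revealed only by actually performing the seeding. In the $A'$-world that seeding \emph{is} performed (the node is not yet active there, since the $A'$-world lags the $A^*$-world), so $S_{A'}^G$ observes the outcome and its later adaptive choices may branch on it. After a skip, your $S_{A^*}^G$ can no longer reconstruct the partial realization that $S_{A'}^G$ "would have seen," so it cannot determine which node the copy prescribes next, and the per-realization identity $N_x^G(Node(S_{A^*}^{G^x}))=N_x^G(Node(S_{A'}^{G^x}))$ is no longer guaranteed. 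Your own justification ("an edge state is revealed exactly when it becomes incident to an active node") covers the $E_c^2$ edges but not the $E_c^1$ edges of skipped seedings. The fix is simply not to skip: in the auxiliary graph the $j$-th seeding of $v_i$ is a distinct node $v_{i,j}$ with its single edge into $v_{0,i}$, so re-seeding an already-active target wastes nothing, changes no count, and --- crucially --- reveals the edge state needed to keep the simulation of $S_{A'}^G$ synchronized. This is what the paper's construction (seeding the sequence $Q$ in order, without omissions) implicitly does.
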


\begin{proof}
The main idea is to construct a strategy $S_{A^{*}}^G$ according to $S_{A^{'}}^G$ such that, in any f-realization $x$, $ N_x^G(Node(S_{A^{'}}^{G^x}))=N_x^G(Node(S_{A^{*}}^{G^x}))$. 

Let ${\overline{x}}$ be an arbitrary but unknown f-realization of $c$-$G$. Suppose $S_{A^{'}}^{G^{\overline{x}}}=(s_1,...,s_N)$ and $A^{'}=(a_1,...,a_N)$. Assume $s_i=\{v_{i,1},...,v_{i,a_i}\}$ where the nodes are randomly ordered. Note that $s_1$ is known before the process of spread and $s_i$ ($i > 1$) is unknown until step $i$ as it depends on the outcomes of the past rounds. Let $Q$ be the sequence of the nodes in $\cup s_i$, where the nodes are non-decreasingly ordered by the nodes index in $s_i$ according to the lexicographical order. Following pattern $A^{*}$, let $S_{A^{*}}^{G^{\overline{x}}}$ choose the node in $Q$ in order. For the example shown in Fig. \ref{fig:assign_example} with f-realization $x_1$, the seeding process of strategy $S_{A_1}^{G_1^{x_1}}$ and its corresponding strategy $S_{A^{*}}^{G_1^{x_1}}$ are shown in Table \ref{table:process}.

One can see that $S_{A^{*}}^G$ does nothing but choose the nodes that are chosen by $S_{A^{'}}^G$. Note that although $S_{A^{'}}^G$ is known to us, the seed nodes produced by $S_{A^{'}}^G$ are undetermined as they depends on ${\overline{x}}$. Suppose $S_{A^{*}}^{G^{\overline{x}}}$ selects $v_{i,j}$ in the $l^{th}$ step, and the p-realizations in step $i$ under $S_{A^{'}}^{G^{\overline{x}}}$ and that under $S_{A^{*}}^{G^{\overline{x}}}$ in step $l$  are $y_1$ and $y_2$, respectively. To guarantee the feasibility of the construction of $S_{A^{*}}^{G^{\overline{x}}}$, $y_2$ must be compatible to $y_1$, which means, in realization ${\overline{x}}$, the events happening by step $i$  under strategy $S_{A^{'}}^{G^{\overline{x}}}$ is a subset of that of happening by step $l$ under strategy $S_{A^{*}}^{G^{\overline{x}}}$. For otherwise, in step $l$, $S_{A^{*}}^G$ cannot determine which node $v_{i,j}$ is.

In fact, such feasibility can be guaranteed by pattern $A^{*}$. Let $\overline{v}_i$ be the $i^{th}$ node in $Q$. Suppose $S_{A^{'}}^{G_{\overline{x}}}$ and $S_{A^{*}}^{G^{\overline{x}}}$ seeds $\overline{v}_i$ in step $l_i^{'}$ and step $l_i^{*}$, respectively.
Let $y_i^{'}$ (resp. $y_i^{*}$) be the p-realization under $S_{A^{'}}^{G^{\overline{x}}}$ (resp. $S_{A^{*}}^{G^{\overline{x}}}$) by step $l_i^{'}$ (resp. $l_i^{*}$). We need to prove that $y_{i}^*$ is compatible to $y_i^{'}$, for any $i>1$. We prove it by induction. Clearly, $y_{1}^{*}$ is compatible to $y_{1}^{'}$ as $y_{1}^{*}=y_{1}^{'}=\epsilon$. Suppose $y_{i}^{*}$ is compatible to $y_{l_i}^{'}$ for any $i$ less than some $k$. Now we prove that $y_{k}^{*}$ is compatible to $y_{k}^{'}$. For contraction, suppose $y_{k}^{*}$ is not compatible to $y_{k}^{'}$. By the supposition, there is an event in $x$ that happens in $y_{k}^{'}$ while has not happened in $y_{l_k}^{*}$. However, $y_{l_{k-1}}^{*}$ is compatible to $y_{k-1}^{'}$, and, by pattern $A^{*}$, there is no node can be further activated in realization $\overline{x}$ by step $l_k^{*}$ under $S_{A^{*}}^G$. This implies that $S_{A^{'}}^G$ must wait for some null rounds between step $l_{k-1}^{'}$ and step $l_{k}^{'}$, which is a contradiction. 

By the construction of $S_{A^{*}}^G$, since $Node(S_{A^{*}}^{G^x})=Node(S_{A^{'}}^{G^x})$ in any f-realization $x$, we have $E[S_{A^{*}}^G]=E[S_{A^{'}}^G]$ according to Eq. (\ref{eq:e_def}).
\end{proof}

One can see that any strategy of a pattern other than $A^{*}$ cannot always simulate the one of pattern $A^{*}$ by the similar construction due to the feasibility issue as discussed above. Intuitively, pattern $A^{*}$ is the optimal because it maximizes the information obtained before making seeding decision and brings us more options in selecting seed nodes. The above result is summarized as follows.
\begin{theorem}
\label{theorem:result_1}
Pattern $A^{*}$ is the optimal pattern on any graph $G$, i.e., for any pattern $A$, $E[OPT_{A^*}^G] \geq E[OPT_{A}^G]$.
\end{theorem}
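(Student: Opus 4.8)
The plan is to obtain Theorem~\ref{theorem:result_1} as an almost immediate corollary of Lemma~\ref{lemma:result_1}, reducing the comparison of \emph{optimal} strategies to the comparison of a single pair of strategies that the lemma already relates. The key observation is that an optimal strategy is just a particular strategy, so the lemma applies to it verbatim.

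First I would fix an arbitrary pattern $A$ and consider its optimal strategy $OPT_A^G$. Since $OPT_A^G$ is itself a seeding strategy of pattern $A$ on $G$, I can invoke Lemma~\ref{lemma:result_1} with $S_{A'}^G := OPT_A^G$. This produces a seeding strategy $S_{A^*}^G$ of pattern $A^*$ satisfying $E[S_{A^*}^G] = E[OPT_A^G]$. Next, because $OPT_{A^*}^G$ is by definition the strategy of pattern $A^*$ that maximizes the expected number of active nodes, it dominates every particular $A^*$-strategy, and in particular the one just constructed; hence $E[OPT_{A^*}^G] \geq E[S_{A^*}^G]$. Chaining the equality and the inequality gives $E[OPT_{A^*}^G] \geq E[S_{A^*}^G] = E[OPT_A^G]$, and since $A$ was arbitrary the claim $E[OPT_{A^*}^G] \geq E[OPT_A^G]$ follows for every pattern.

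The essential difficulty is not located in this theorem at all but has already been absorbed into Lemma~\ref{lemma:result_1}: the delicate point is the feasibility (compatibility) argument showing that the simulating $A^*$-strategy is actually implementable, i.e., that by waiting until no further activation is possible before each single seeding, pattern $A^*$ never has \emph{less} observed information than the arbitrary pattern $A'$ at the corresponding step, so it can always identify and reproduce the next node chosen by $S_{A'}^G$ while matching $Node(S_{A^*}^{G^x}) = Node(S_{A'}^{G^x})$ in every full realization $x$. Once that inductive compatibility claim is in hand, nothing further is required here; the theorem is genuinely a one-line consequence, and I would present it as such rather than re-deriving any realization-level bookkeeping.
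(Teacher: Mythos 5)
Your proposal is correct and matches the paper's own proof: both apply Lemma~\ref{lemma:result_1} to the particular strategy $OPT_A^G$ to obtain an $A^*$-strategy of equal expected value, then bound it above by $E[OPT_{A^*}^G]$. You also correctly identify that all the real work lives in the lemma's compatibility/feasibility argument, not in the theorem itself.
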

\begin{proof}
By Lemma \ref{lemma:result_1}, for any pattern $A$ and network $G$, we always have some strategy $S_{A^{*}}^G$ of $A^{*}$ such that $E[S_{A^{*}}^G] \geq E[OPT_{A}^G]$. Thus, 
\begin{equation}
E[OPT_{A^*}^G] \geq E[S_{A^{*}}^G] \geq E[OPT_{A}^G]. \nonumber
\end{equation}
\end{proof}

\subsection{Approximation Ratio}
In this section, we show that $\overline{S}_{A^{*}}^G$ has a approximation ratio of $(1-1/e)$.

The method to represent the random event space is critical to the analysis of a stochastic model. Essentially, the adaptive seeding strategy $\overline{S}_{A^{*}}^G$ forms a decision tree, where each node in the tree is a selected seed set and each out-edge of the tree-node represents a possible successive event. Let the root node be the first level. Then, each branch from level $i$ to level $i+1$ corresponds to a p-realization after round $i$ under $\overline{S}_{A^{*}}^{G}$. Each path from the root to a leave is formed by a sequence of p-realizations where each p-realization is compatible to its predecessor. For the decision tree of $\overline{S}_{A^{*}}^G$, let $Z_i=\{z_i^{1},...,z_i^{|Z_i|}\}$ be the set of the p-realizations (branches) from level $i$ to level $i+1$ where $|Z_i|$ is number of branches, and $Z_0=\{\epsilon\}$. Although the basic event space is unique, it can be represented via different decision trees under different strategies. For Example \ref{exp:intro} shown in Fig. \ref{fig:example}, the decision tree of a strategy of pattern $A^{*}$ on $G_1$ is shown in Fig. \ref{fig:tree_2} where the explanations are available in the caption. Note that for a DIC network G the decision tree of $\overline{S}_{A^{*}}^G$ is determinate. 


\begin{figure}[t]
\begin{center}
\includegraphics[width=3in]{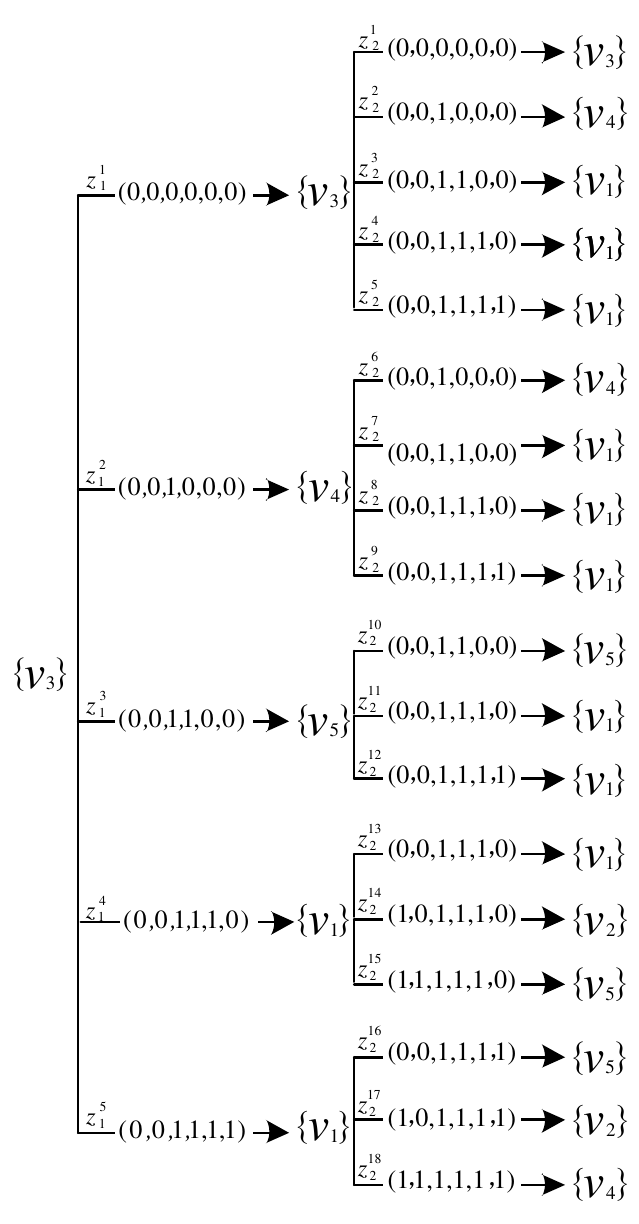} 
\end{center} 
\caption{\small{The decision tree of a strategy under pattern $A^*$ on the example DIC network $G_1$. For the vector $(x_1,x_2,x_3,x_4,x_5)$ on a branch $z_i^j$, that $x_i=0$ (resp. $x_i=1$) means node $v_i$ is active (resp. inactive) after round $i$ through branch $z_i^j$. In this example, branch $z_1^1$ implies $v_3$ is not successfully activated in step 1, and following pattern $A^{*}$ we have totally 5 and 18 branches from level 1 to level 2 and from level 2 to level 3, respectively.}}
\label{fig:tree_2}
\vspace{-4mm}
\end{figure}

Now we are ready to show the main result of this paper. Our goal is to prove that 
\begin{equation*}
E[OPT_{A^{*}}^G] \leq (1-1/e) \cdot E[\overline{S}_{A^{*}}^G]. 
\end{equation*}
For an arbitrary network $G$, let $t_i$ be the $i^{th}$ seed node  selected by $OPT_{A^{*}}^G$, and $T_i=\{t_1,...,t_i\}$. Similarly let $w_i$ be the $i^{th}$ seed node selected by $\overline{S}_{A^{*}}^G$ and $W_i=\{w_1,...,w_i\}$. Set $T_0=W_0=\emptyset$. We use the decision tree to analyze the seeding process. 

For a node set $V^{'}$ and a p-realization $z_i^j$, let
\begin{equation}
\label{eq:f_i^j}
F_i^j(V^{'}) = \sum_{x \in C_G(z_i^{j})} \text{Prob}[x|z_i^{j}] \cdot \textit{N}_{x}^G(V^{'})
\end{equation}
and
\begin{equation}
\label{eq:f_i}
F_i(V^{'})=\sum_{j=1}^{|Z_i|} \text{Prob}[z_i^j] \cdot F_i^j(V^{'}).
\end{equation}
One can see that $F_i^j(W_i)$ is the expected number of active nodes under seed set $W_i$ conditioned on p-realization $z_i^j$ and $F_B(W_B)=E[\overline{S}_{A^{*}}^G]$ . 

By Rule \ref{rule:greedy},
\begin{equation}
\label{eq:margin}
w_{i+1}= \argmax_v F_i^j(W_i \cup \{v\}). 
\end{equation}
Let
\begin{eqnarray}
\label{eq:delta_i^j}
\Delta_{i}^j= F_i^j(W_{i} \cup \{w_{i+1}\})-F_i^j(W_{i}),
\end{eqnarray}
for $0 \leq i \leq B-1$ .
\begin{lemma}
\label{lemma:sub}
\begin{eqnarray}
F_i^j(T_B) \leq F_i^j(W_i)+B \cdot \Delta_{i}^j \nonumber
\end{eqnarray}

\end{lemma}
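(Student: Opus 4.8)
The plan is to recognize Lemma \ref{lemma:sub} as the conditional analogue, at a fixed decision-tree branch $z_i^j$, of the classical greedy inequality $f(\mathrm{OPT}) \le f(S) + B \cdot (\text{best marginal gain})$ that underlies the $(1-1/e)$ analysis of \cite{nemhauser1978analysis}. Throughout the proof everything is conditioned on the single branch $z_i^j$, so $W_i$ and $w_{i+1}$ denote the seeds accumulated along the path reaching $z_i^j$, while $T_B$ is simply \emph{some} seed set of at most $B$ nodes in $\bigcup_{i>0}V_c^i$; its precise identity is irrelevant to the inequality. The first step is to establish that the set function $F_i^j(\cdot)$ inherits the two structural properties of $N_x^G(\cdot)$. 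By definition (\ref{eq:f_i^j}), $F_i^j$ is the combination $\sum_{x \in C_G(z_i^j)} \mathrm{Prob}[x \mid z_i^j]\cdot N_x^G(\cdot)$ with nonnegative weights summing to one; since monotonicity (Property \ref{property:nondecreasing}) and submodularity (Property \ref{property:submodular}) are preserved under nonnegative linear combinations, $F_i^j$ is itself monotone nondecreasing and submodular.

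Second, I would run the standard telescoping argument. By monotonicity, $F_i^j(T_B) \le F_i^j(W_i \cup T_B)$, so it suffices to bound the right-hand side. Writing $T_B=\{t_1,\ldots,t_B\}$ and telescoping over these elements gives
\begin{equation*}
F_i^j(W_i \cup T_B) - F_i^j(W_i) = \sum_{k=1}^{B} \Bigl[ F_i^j\bigl(W_i \cup \{t_1,\ldots,t_k\}\bigr) - F_i^j\bigl(W_i \cup \{t_1,\ldots,t_{k-1}\}\bigr) \Bigr].
\end{equation*}
Applying submodularity along the chain $W_i \subseteq W_i \cup \{t_1,\ldots,t_{k-1}\}$ bounds each summand by the single-element marginal $F_i^j(W_i \cup \{t_k\}) - F_i^j(W_i)$ (if $t_k$ already lies in the larger set this marginal is zero, which is harmless since marginals are nonnegative by monotonicity). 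Finally the greedy rule (\ref{eq:margin}), which picks $w_{i+1}$ to maximize $F_i^j(W_i \cup \{v\})$, bounds each such marginal by $\Delta_i^j$ as defined in (\ref{eq:delta_i^j}). Summing the $B$ terms yields $F_i^j(W_i \cup T_B) - F_i^j(W_i) \le B \cdot \Delta_i^j$, and chaining with the monotonicity step completes the argument.

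I expect the genuinely mathematical content to be routine once the setup is pinned down, since it is exactly the submodular greedy lemma. The point requiring the most care is not any hard estimate but the bookkeeping around the conditioning: one must make explicit that the greedy maximization in (\ref{eq:margin}) is taken at the \emph{same} branch $z_i^j$ against which $F_i^j$ is defined, so that $w_{i+1}$ is genuinely the conditionally optimal marginal choice and $\Delta_i^j$ is the largest available conditional gain. The only other subtlety is the possible overlap between $T_B$ and $W_i$ (or among the $t_k$), which I would dispose of in one line via the zero-marginal observation above, and the cardinality bound $|T_B| \le B$, which is precisely the budget constraint.
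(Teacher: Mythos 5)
Your proof is correct and follows essentially the same route as the paper's: telescope over the elements of $T_B$, bound each increment by the best single-element marginal via submodularity (Property \ref{property:submodular}), invoke the greedy rule (\ref{eq:margin}) to cap that marginal by $\Delta_i^j$, and finish with monotonicity. The only cosmetic difference is that you first observe $F_i^j$ inherits monotonicity and submodularity as a nonnegative combination of the $N_x^G$ and then telescope at the level of $F_i^j$, whereas the paper telescopes $N_x^G$ pointwise in $x$ and then averages; these are interchangeable.
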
 
\begin{proof}
For $1 \leq h \leq B$, by Property \ref{property:submodular}, 
\begin{eqnarray*}
\textit{N}_{x}^{G}(T_h \cup W_i)&-&\textit{N}_{x}^{G}(T_{h-1} \cup W_i) \\
&\leq& \textit{N}_{x}^{G}(\{t_h\} \cup W_i)-\textit{N}_{x}^{G}(W_i) 
\end{eqnarray*}
Thus, 
\begin{eqnarray}
& &\sum_{x \in C_G(z_i^{j})} \text{Prob}[x|z_i^{j}] \big( \textit{N}_{x}^{G}(T_h \cup W_i)-\textit{N}_{x}^{G}(T_{h-1} \cup W_i) \big) \nonumber \\ 
&\leq& \sum_{x \in C_G(z_i^{j})} \text{Prob}[x|z_i^{j}] \big(\textit{N}_{x}^{G}(\{t_h\} \cup W_i)-\textit{N}_{x}^{G}(W_i) \big) \nonumber \\
&&\{~\text{\small by Eq. (\ref{eq:f_i^j})}~\} \nonumber \\
&=& F_i^j(\{t_h\} \cup W_i)) -F_i^j(W_i)\nonumber \\
&&\{~\text{\small by Eq. (\ref{eq:margin})}~\} \nonumber \\
&\leq& F_i^j(W_{i+1})) -F_i^j(W_i)\nonumber \\
&&\{~\text{\small by Eq. (\ref{eq:delta_i^j})}~\} \nonumber \\
&=& \Delta_{i}^j.\nonumber
\end{eqnarray}
Adding the above inequalities for all $1 \leq h \leq B$, we have  
\begin{eqnarray*}
& &\sum_{1 \leq h \leq B} \sum_{x \in C_G(z_i^{j})} \text{Prob}[x|z_i^{j}] \big( \textit{N}_{x}^{G}(T_h \cup W_i)-\textit{N}_{x}^{G}(T_{h-1} \cup W_i) \big) \nonumber \\
&=&   \sum_{x \in C_G(z_i^{j})} \text{Prob}[x|z_i^{j}] \big( \textit{N}_{x}^{G}(T_B \cup W_i)-\textit{N}_{x}^{G}(T_{0} \cup W_i) \big)\nonumber \\
&=& F_i^j(T_B \cup W_i)-F_i^j(W_i) \\ \nonumber
&\leq& B \cdot \Delta_{i}^j.\nonumber
\end{eqnarray*}
Thus,
\begin{eqnarray}
&&F_i^j(T_B) 
\leq F_i^j(T_B \cup W_i) \leq F_i^j(W_i) + B \cdot \Delta_{i}.
\end{eqnarray}
Note that $T_B$ depends on $x$ and $W_i$ depends on $z_i^j$.

\end{proof}

Let 
\begin{equation}
\label{eq:delta_i}
\Delta_i= \sum_{j=1}^{|Z_i|} \text{Prob}[z_i^j] \cdot \Delta_i^j.
\end{equation} 
\begin{lemma}
\label{lemma:expand}
$F_{i}(W_i)=\Delta_{0}+...+\Delta_{i-1}$.
\end{lemma}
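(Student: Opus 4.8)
The plan is to prove the stronger one-step recurrence
\begin{equation}
\Delta_k = F_{k+1}(W_{k+1}) - F_k(W_k) \nonumber
\end{equation}
for every $0 \le k \le i-1$, and then telescope. Granting this recurrence, summing over $k$ from $0$ to $i-1$ collapses the right-hand side to $F_i(W_i) - F_0(W_0)$; since $W_0 = \emptyset$ and $N_x^G(\emptyset) = 0$ in every f-realization $x$, we get $F_0(W_0) = 0$, which yields exactly $F_i(W_i) = \Delta_0 + \dots + \Delta_{i-1}$. So the whole statement reduces to establishing the recurrence.

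For the recurrence I would start from the definition in Eq. (\ref{eq:delta_i}), expanding $\Delta_k = \sum_{j} \text{Prob}[z_k^j]\bigl(F_k^j(W_{k+1}) - F_k^j(W_k)\bigr)$, where $W_k$ and $w_{k+1}$ (hence $W_{k+1} = W_k \cup \{w_{k+1}\}$) are the greedy seed set and seed node determined along the branch $z_k^j$. The second term sums to $F_k(W_k)$ immediately by Eq. (\ref{eq:f_i}). The crux is therefore the single identity
\begin{equation}
\sum_{j=1}^{|Z_k|} \text{Prob}[z_k^j]\, F_k^j(W_{k+1}) = F_{k+1}(W_{k+1}), \nonumber
\end{equation}
i.e. that re-indexing the conditional expectation of $N_x^G(W_{k+1})$ from the level-$k$ branches to the level-$(k+1)$ branches leaves it unchanged.

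To prove this identity I would use the tree structure together with the tower property of conditional expectation. Each level-$(k+1)$ branch $z_{k+1}^{j'}$ descends from a unique parent $z_k^j$, and $\text{Prob}[z_{k+1}^{j'}] = \text{Prob}[z_k^j]\cdot\text{Prob}[z_{k+1}^{j'}\mid z_k^j]$. The decisive observation is that the greedy seed node $w_{k+1}$ is chosen at the branch $z_k^j$ by Eq. (\ref{eq:margin}) \emph{before} round $k+1$ splits it, so all children of $z_k^j$ carry the identical seed set $W_{k+1}$. Grouping the right-hand side $F_{k+1}(W_{k+1}) = \sum_{j'} \text{Prob}[z_{k+1}^{j'}] F_{k+1}^{j'}(W_{k+1})$ by parent $j$, and noting that the children of $z_k^j$ partition the f-realizations in $C_G(z_k^j)$, the inner sum over children reassembles $\sum_{x \in C_G(z_k^j)} \text{Prob}[x\mid z_k^j]\, N_x^G(W_{k+1}) = F_k^j(W_{k+1})$ via Eq. (\ref{eq:f_i^j}); this gives exactly the claimed identity and hence the recurrence.

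I expect the main obstacle to be bookkeeping the dependence of $W_{k+1}$ on the branch: $W_{k+1}$ is not a fixed set but varies with $z_k^j$, yet it is constant across all children of a given $z_k^j$, and this constancy is precisely what lets the child-branch conditional expectations aggregate back into the parent one. Making the partition of $C_G(z_k^j)$ by its children rigorous (every compatible f-realization follows exactly one child branch, with probabilities multiplying correctly) is the one place demanding care; once that is in hand, the remainder is linearity of $F_k$ and a routine telescoping sum.
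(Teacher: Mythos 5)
Your proposal is correct and follows essentially the same route as the paper: the same re-indexing identity $\sum_{j}\text{Prob}[z_k^j]F_k^j(W_{k+1})=F_{k+1}(W_{k+1})$ followed by the same telescoping sum with $F_0(W_0)=0$. The only difference is that you justify the re-indexing identity via the tower property over the children of each branch, whereas the paper asserts it without proof; your added detail is consistent with the paper's intent.
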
 
\begin{proof}
Note that, for any $0 \leq h<B$
\begin{eqnarray*}
F_{h}(W_{h})&=&\sum_{j=1}^{|z_{{h}}|} \text{Prob} \cdot [z_{{h}}^j]F_i^j(W_{{h}}) \\
&=& \sum_{j=1}^{|z_{{h}-1}|} \text{Prob}[z_{{h}-1}^j] \cdot F_{{h}-1}^j(W_{{h}-1} \cup \{w_{{h}}\}).\nonumber
\end{eqnarray*}
Thus, we have
\begin{eqnarray}
& &\Delta_{0}+...+\Delta_{i-1}\nonumber \\
&&\{~\text{\small by Eq. (\ref{eq:delta_i})}~\} \nonumber \\
&=& \sum_{h<i} \sum_{j=1}^{|Z_h|} \text{Prob}[z_h^j] \cdot \Delta_h^j \nonumber \\
&&\{~\text{\small by Eq. (\ref{eq:delta_i^j})}~\} \nonumber \\
&=& \sum_{h<i} \sum_{j=1}^{|Z_h|} \text{Prob}[z_h^j] \cdot (F_h^j(W_{h} \cup \{w_{h+1}\})-F_h^j(W_{h}))\nonumber \\
&=&\sum_{h<i}(F_{h+1}(W_{h+1})-F_{h}(W_{h})) \nonumber \\
&=& F_i(W_i)-\sum_{j=1}^{|z_{0}|} \text{Prob}[z_{0}^j] \cdot F_0^j(W_{0}) \nonumber \\  
&=&  F_i(W_i) \nonumber
\end{eqnarray}
\end{proof}

Finally, we have the following lemma.
\begin{lemma}
\label{lemma:result_2}
$E[OPT_{A^{*}}^G] \leq (1-1/e) E[S_{A^{*}}^G]$ 
\end{lemma}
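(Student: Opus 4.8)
The plan is to carry out the classical Nemhauser--Wolsey--Fisher greedy analysis for monotone submodular maximization, but lifted onto the decision-tree representation of $\overline{S}_{A^{*}}^G$ so that every quantity is averaged over the branch structure the greedy strategy induces. Lemma \ref{lemma:sub} supplies the per-branch submodular bound and Lemma \ref{lemma:expand} supplies the telescoping identity $F_i(W_i)=\Delta_0+\cdots+\Delta_{i-1}$; what remains is to average the former across branches, recognize the averaged left-hand side as $E[OPT_{A^{*}}^G]$, and solve the resulting one-step recursion.

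First I would establish that, at every level $i$, the branches $\{z_i^j\}_{j=1}^{|Z_i|}$ partition the set of full realizations $C_G(\epsilon)$, so that $\sum_j \text{Prob}[z_i^j]=1$ and, using $\text{Prob}[z_i^j]\cdot\text{Prob}[x\mid z_i^j]=\text{Prob}[x]$ for $x\in C_G(z_i^j)$,
\[
\sum_{j=1}^{|Z_i|}\text{Prob}[z_i^j]\,F_i^j(T_B)=\sum_{x\in C_G(\epsilon)}\text{Prob}[x]\,N_x^G(T_B)=E[OPT_{A^{*}}^G],
\]
where $T_B=T_B(x)$ is the realization-dependent seed set chosen by $OPT_{A^{*}}^G$, read off from Eq. (\ref{eq:e_def}). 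This is the step that genuinely uses the adaptive structure, and I expect it to be the main obstacle: inside each $F_i^j$ the set $T_B$ varies with $x$ (as flagged at the end of the proof of Lemma \ref{lemma:sub}), so it is only after weighting by $\text{Prob}[z_i^j]$ and collapsing the conditional probabilities that the per-branch conditional values recombine into the single unconditional optimum. Care is needed that the same $x$ simultaneously determines $OPT$'s choice $T_B(x)$ and the greedy branch it follows.

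Next I would multiply the inequality of Lemma \ref{lemma:sub} by $\text{Prob}[z_i^j]$ and sum over $j$. By the identity just proved the left side is $E[OPT_{A^{*}}^G]$; by Eq. (\ref{eq:f_i}) the first right-hand term is $F_i(W_i)$; and by Eq. (\ref{eq:delta_i}) the second is $B\cdot\Delta_i$. Hence $E[OPT_{A^{*}}^G]\le F_i(W_i)+B\cdot\Delta_i$ for every $0\le i\le B-1$.

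Finally I would turn this family of inequalities into a recursion. Writing $\sigma_i=F_i(W_i)$, Lemma \ref{lemma:expand} gives $\Delta_i=\sigma_{i+1}-\sigma_i$, so with $\delta_i=E[OPT_{A^{*}}^G]-\sigma_i$ the bound rearranges to $\delta_i\le B(\delta_i-\delta_{i+1})$, i.e. $\delta_{i+1}\le(1-1/B)\,\delta_i$. Since $W_0=\emptyset$ forces $\sigma_0=F_0(W_0)=0$ and hence $\delta_0=E[OPT_{A^{*}}^G]$, iterating $B$ times gives $\delta_B\le(1-1/B)^B\,E[OPT_{A^{*}}^G]\le e^{-1}E[OPT_{A^{*}}^G]$. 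As $\sigma_B=F_B(W_B)=E[\overline{S}_{A^{*}}^G]$, this reads $E[OPT_{A^{*}}^G]-E[\overline{S}_{A^{*}}^G]\le e^{-1}E[OPT_{A^{*}}^G]$, which is exactly the claimed $(1-1/e)$ guarantee $E[\overline{S}_{A^{*}}^G]\ge(1-1/e)\,E[OPT_{A^{*}}^G]$.
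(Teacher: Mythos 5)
Your proposal is correct and follows essentially the same route as the paper: both average the per-branch bound of Lemma \ref{lemma:sub} over the decision-tree branches to get $E[OPT_{A^{*}}^G]\le F_i(W_i)+B\cdot\Delta_i$ and then combine with the telescoping identity of Lemma \ref{lemma:expand}. The only difference is cosmetic --- you close by iterating the contraction $\delta_{i+1}\le(1-1/B)\delta_i$, whereas the paper multiplies each inequality by $(1-1/B)^{B-1-i}$ and sums; these are two standard, equivalent ways of solving the same recurrence, and your explicit justification that $\sum_j \text{Prob}[z_i^j]F_i^j(T_B)=E[OPT_{A^{*}}^G]$ is a point the paper merely asserts.
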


\begin{proof}
For $0\leq i \leq B-1$, we have
\begin{equation}
E[OPT_{A^{*}}^G]=F_0^1(T_B)=F_i(T_B). \nonumber
\end{equation}
By Lemma \ref{lemma:sub},
\begin{equation}
F_i^j(T_B) \leq F_i^j(W_i)+B \cdot \Delta_{i}^j, \nonumber
\end{equation}
i.e.,
\begin{equation}
E[OPT_{A^{*}}^G] \leq F_i(W_i)+B \cdot \Delta_{i}, \nonumber
\end{equation}
Thus, combining Lemma \ref{lemma:expand}, 
\begin{equation}
\label{eq:1}
E[OPT_{A^{*}}^G] \leq \Delta_0+...+\Delta_{i-1}+B \cdot \Delta_{i}, 
\end{equation}
By multiplying the both sides of Eq. (\ref{eq:1}) by $(1-1/B)^{B-1-i}$ we have
\begin{eqnarray}
\label{eq:2}
&&E[OPT_{A^{*}}^G] \cdot (1-1/B)^{B-1-i}\\ \nonumber
&\leq& (\Delta_0+...+\Delta_{i-1}+B \cdot \Delta_{i})  \cdot (1-1/B)^{B-1-i} 
\end{eqnarray}
Now we add up Eq. (\ref{eq:2}) for $0 \leq i \leq B-1$. The left side of the summation is 
\begin{eqnarray}
\label{eq:3}
\sum_{i=0}^{B-1}E[OPT_{A^{*}}^G] \cdot (1-1/B)^{B-1-i} \nonumber \\ 
= B(1-(1-\frac{1}{B})^{B}) \cdot E[OPT_{A^{*}}^G]
\end{eqnarray}
On the right side, the coefficient of $\Delta_{i}$ is
\begin{eqnarray}
\label{eq:4}
&& B \cdot (1-\dfrac{1}{B})^{B-i}+\sum_{j=i}^{B-1} (1-1/B)^{B-1-j}=B 
\end{eqnarray}
Thus, by Eqs. (\ref{eq:3}) and (\ref{eq:4}),
\begin{eqnarray}
\label{eq:5}
&& E[OPT_{A^{*}}^G]\cdot B \cdot (1-(1-1/B)^B) \\
&\leq& B \cdot (\Delta_0+...+\Delta_{B-1}) \nonumber \\
&&\{~\text{\small by Lemma \ref{lemma:expand}}~\} \nonumber \\
&=&B \cdot F_B(W_B) \nonumber \\
&=&B\cdot E[S_{A^{*}}^G]. \nonumber 
\end{eqnarray}
Therefore, the approximation ratio of $S_{A^{*}}^G$ is at least $(1-1/e)$.
\end{proof}
The above result is summarized as follows.
\begin{theorem}
\label{theorem:result_2}
$\overline{S}_{A^{*}}^{G}$ is a strategy within a factor $1-1/e$ from the optimal strategy of pattern $A^{*}$.
\end{theorem}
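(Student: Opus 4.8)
The plan is to read this theorem as the clean restatement of Lemma \ref{lemma:result_2}: I want to establish $E[\overline{S}_{A^{*}}^G] \geq (1-1/e)\,E[OPT_{A^{*}}^G]$, where $\overline{S}_{A^{*}}^G$ is the strategy produced by the greedy Rule \ref{rule:greedy} under the optimal pattern $A^{*}$. Since all the technical machinery has already been assembled on the decision tree of $\overline{S}_{A^{*}}^G$, the argument reduces to invoking Lemma \ref{lemma:result_2} and interpreting the two endpoints of its chain of inequalities. First I would recall the bookkeeping identities $F_B(W_B)=E[\overline{S}_{A^{*}}^G]$ and $F_0^1(T_B)=E[OPT_{A^{*}}^G]$, so that the quantities appearing in Lemmas \ref{lemma:sub}--\ref{lemma:result_2} are exactly the expected spreads of the greedy and the optimal strategy of pattern $A^{*}$.

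The heart of the derivation is the per-branch submodular inequality of Lemma \ref{lemma:sub}, namely $F_i^j(T_B)\leq F_i^j(W_i)+B\cdot\Delta_i^j$, which bounds the optimal objective conditioned on each decision-tree branch $z_i^j$ by the current greedy value plus $B$ times the greedy marginal gain. Averaging over the branches of level $i$ gives $E[OPT_{A^{*}}^G]\leq F_i(W_i)+B\cdot\Delta_i$. Combining this with the telescoping identity $F_i(W_i)=\Delta_0+\dots+\Delta_{i-1}$ of Lemma \ref{lemma:expand} yields $E[OPT_{A^{*}}^G]\leq \Delta_0+\dots+\Delta_{i-1}+B\cdot\Delta_i$ valid for every $0\leq i\leq B-1$.

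Next I would apply the classical geometric weighting: multiply the inequality for each $i$ by $(1-1/B)^{B-1-i}$ and sum over $0\leq i\leq B-1$. The left-hand side collapses to $B\bigl(1-(1-1/B)^B\bigr)\,E[OPT_{A^{*}}^G]$, while on the right the coefficient of each $\Delta_i$ telescopes to exactly $B$, leaving $B\cdot(\Delta_0+\dots+\Delta_{B-1})=B\cdot F_B(W_B)=B\cdot E[\overline{S}_{A^{*}}^G]$. Dividing by $B$ and using $(1-1/B)^B\leq 1/e$, hence $1-(1-1/B)^B\geq 1-1/e$, delivers $(1-1/e)\,E[OPT_{A^{*}}^G]\leq E[\overline{S}_{A^{*}}^G]$, which is precisely the asserted approximation guarantee.

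The only genuinely delicate point is the one already absorbed into Lemma \ref{lemma:sub}: ensuring that the deterministic Nemhauser--Wolsey greedy argument survives the passage from a fixed ground set to the stochastic decision tree. Concretely, in $F_i^j(T_B)\leq F_i^j(T_B\cup W_i)$ the optimal seed set $T_B$ depends on the underlying f-realization $x$ whereas the greedy set $W_i$ depends only on the branch $z_i^j$, so Property \ref{property:submodular} (submodularity of $N_x^G$) must be applied realization-by-realization \emph{before} taking the conditional expectation with weight $\text{Prob}[x\mid z_i^j]$. Once that compatibility between the two index structures is granted, everything else is the routine telescoping-and-weighting computation above, and the theorem follows as the summary of Lemma \ref{lemma:result_2}.
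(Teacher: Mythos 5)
Your proposal is correct and takes essentially the same route as the paper: Theorem~\ref{theorem:result_2} is just the summary of Lemma~\ref{lemma:result_2}, and your chain --- the per-branch bound of Lemma~\ref{lemma:sub}, the telescoping identity of Lemma~\ref{lemma:expand}, the $(1-1/B)^{B-1-i}$ weighting with the coefficient of each $\Delta_i$ collapsing to $B$, and $1-(1-1/B)^B \geq 1-1/e$ --- reproduces the paper's derivation step for step. You also correctly flag the one delicate point the paper itself only remarks on in passing, namely that $T_B$ depends on the realization $x$ while $W_i$ depends only on the branch $z_i^j$, so Property~\ref{property:submodular} must be applied per realization before taking the conditional expectation.
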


Since $A^*$ is the optimal pattern as discussed in Sec. \ref{subsec:opt_pattern}, $\overline{S}_{A^{*}}^{G}$ is an $(1-1/e)$-approximation of AIM problem.
\begin{corollary}
\label{coro:result}
$\overline{S}_{A^{*}}^{G}$ is an $(1-1/e)$-approximation of AIM problem.
\end{corollary}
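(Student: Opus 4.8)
The plan is to derive the corollary directly by chaining the two theorems already established, since the AIM problem asks us to optimize jointly over both the seeding pattern and the seeding strategy, and each of those two optimizations has been handled separately. First I would unwind the definition of the AIM optimum. By the problem statement, a solution to AIM is a pair consisting of a pattern $A$ together with a strategy $S_A^G$ of that pattern, and the objective is $E[S_A^G]$. Hence the optimal achievable value is
\begin{equation*}
OPT_{\text{AIM}} = \max_{A} \, \max_{S_A^G} E[S_A^G] = \max_{A} E[OPT_A^G],
\end{equation*}
where the inner maximization is exactly the definition of $OPT_A^G$ as the best strategy of pattern $A$.

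Next I would invoke Theorem \ref{theorem:result_1}, which states that $E[OPT_{A^*}^G] \geq E[OPT_A^G]$ for every pattern $A$. This says precisely that $A^{*}$ attains the outer maximum above, so that $\max_{A} E[OPT_A^G] = E[OPT_{A^*}^G]$, and therefore $OPT_{\text{AIM}} = E[OPT_{A^*}^G]$. In other words, restricting attention to pattern $A^{*}$ loses nothing: the global AIM optimum coincides with the optimal strategy under the single pattern $A^{*}$.

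Finally I would apply Theorem \ref{theorem:result_2}, equivalently Lemma \ref{lemma:result_2}, which guarantees that the greedy strategy under $A^{*}$ satisfies $E[\overline{S}_{A^{*}}^{G}] \geq (1-1/e)\, E[OPT_{A^*}^G]$. Substituting the identity from the previous step, $E[OPT_{A^*}^G] = OPT_{\text{AIM}}$, yields
\begin{equation*}
E[\overline{S}_{A^{*}}^{G}] \geq (1-1/e)\, OPT_{\text{AIM}},
\end{equation*}
which is exactly the claim that $\overline{S}_{A^{*}}^{G}$ is a $(1-1/e)$-approximation of AIM.

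I do not expect a genuine obstacle here, because all the analytic work is carried by the two theorems: Theorem \ref{theorem:result_1} disposes of the pattern choice and Theorem \ref{theorem:result_2} disposes of the approximation guarantee within that pattern. The only point that deserves care is the bookkeeping in the first step, namely making explicit that the AIM objective maximizes over pattern and strategy \emph{simultaneously}, so that collapsing the inner maximum to $OPT_A^G$ and then the outer maximum to $A^{*}$ is legitimate and does not hide any interaction between the two choices. Once that is spelled out, the corollary follows by transitivity of the two inequalities.
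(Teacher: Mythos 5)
Your proof is correct and follows essentially the same route as the paper: the corollary is obtained by chaining Theorem \ref{theorem:result_1} (pattern $A^{*}$ is optimal among all patterns) with Theorem \ref{theorem:result_2} (the greedy strategy is a $(1-1/e)$-approximation within pattern $A^{*}$). Your explicit decomposition of the AIM optimum as a nested maximization over patterns and strategies is just a more careful spelling-out of the one-line argument the paper gives.
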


Golovin \textit{et al.} \cite{golovin2010adaptive} apply the stochastic submodular maximization technique to several applications including the influence diffusion in social networks. They conjecture that applying Rule. \ref{rule:greedy} to pattern $A_0$ in the classic IC model yields an $(1-1/e)$-approximation to the optimal seeding strategy under pattern $A_0$. Actually the derivation of Theorem \ref{theorem:result_2} can be applied to any pattern where we seed at most one node in each step in the DIC model. Therefore, since the classic IC model is a special case of the DIC model, the truth of their conjecture in \cite{golovin2010adaptive} can be verified. In fact, under any pattern, Rule \ref{rule:greedy} is able to provide an approximation with the same ratio. As this paper focuses on designing practical seeding strategies, we will not show the technical proof of that result. 
\begin{algorithm}[b]
\caption{ \textbf{A-Greedy}}\label{alg:a^*}
\begin{algorithmic}[1]
\State \textbf{Input}: \small{$G=(V,E,F_V,F_E)$ and budget B.}
\State CurrentBudget $\leftarrow 0$; $A \leftarrow \emptyset$; 
\State $y_0=\epsilon$;~~// $y_i$ is the p-realization after round $i$ .
\For {each $v$ in $V$}  $S_v \leftarrow +\infty$;
\EndFor
\For {$i=1:N$}	
	\If {(CurrentBudget$<$B and no nodes can be further activated) }
		\For {each $v$ in $V\setminus A$}  $s_v \leftarrow false$;\EndFor
		\While {true}
		\State $v^{*}=\argmax_{v \in V\setminus A}S_v $
			\If {($s_{v^{*}}=true$)} $A \leftarrow A \cup v^{*}$; break;
			\Else~~{$s_{v^{*}}=\sum_{x \in C_G(y_{i-1})} \text{Prob}[x|y_{i-1}] \cdot \textit{N}_{x}^G(A \cup {v^{*}})$}
			\EndIf
		\EndWhile
		\State CurrentBudget+CurrentBudget+1;
	\EndIf
	\State Get $y_i$;~~// wait for a round of spread
\EndFor
\State $y^{*}\leftarrow y_{N}$
\State Return $\textit{N}_{y^*}^G(A)$
\end{algorithmic}
\end{algorithm}
\subsection{Implementation Issues}
To implement the proposed greedy algorithm, the only problem left is to calculate Eq. (\ref{eq:margin}). Unfortunately, as discussed in \cite{chen2009efficient}, it is \#P-hard to calculate the real value of  $ \sum_{x \in C_G(z_i^{j})} \text{Prob}[x|y] \cdot \textit{N}_{x}^G(V^{'})$ in Eq. (\ref{eq:f_i^j}). However, we can employ the Monte Carlo simulation to obtain an accurate estimation. By the Hoeffding's Inequality, the error of the estimation can be infinitely small when a sufficient number of simulations are performed. Another issue one may concern is the efficiency of the greedy algorithm because a large number of simulation may required for an accurate estimation. As shown in \cite{leskovec2007cost}, the Lazy-Forward technique could be implemented in a hill-climbing strategy and leads to far fewer evaluations. The pseudo-code of $\overline{S}_{A^{*}}^{G}$  with Lazy-Forward method is shown in Algorithm \ref{alg:a^*}. We denote this adaptive seeding strategy by \textbf{A-Greedy}.

\section{Heuristic Seeding Strategy}
\label{sec:heuristic}
In this section, we present a heuristic adaptive seeding strategy based on the greedy algorithm in Sec. \ref{sec:greedy}. To reduce the time consumed in the seeding process, a simple idea is to reduce the number of nodes that could be considered as seed nodes. Obviously, the performance of the seeding strategy cannot be guaranteed if we inappropriately exclude some nodes before the seeding process. Thus, we aim to study that what kinds of nodes can be ignored in the seeding process. An important observation as shown later in Sec. \ref{sec:exp} it that there could be a significant gap of the strength between the influential nodes and other nodes. This fact is coincident to the power-law nature of the real-world social networks where degree of the nodes follows the exponential distribution. Motivated by this observation, we design a heuristic seeding strategy, termed as \textbf{H-Greedy}, that narrows the candidate seed set before the seeding process.

\textbf{H-Greedy.} Let $H(v)$ be the number of the nodes can be activated by a single seed node $v$. Let E[.] and Std[.] denote the mean and the standard deviation of a random variable. H-Greedy consists of two steps. First, before we start the seeding process, by Monti Carlo simulation, we first obtain the estimates of  $E[H(v)]$, $E[\sum_{v \in V}H(v)/N]$, and $Std[\sum_{v \in V}H(v)/N]$. We denote those three estimates by $\text{\^{E}}[H(v)]$, $\text{\^{E}}[\sum_{v \in V}H(v)/N]$, and $\text{\^{Std}}[\sum_{v \in V}H(v)/N]$, respectively. Then, when determining a seed node in the seeding process, we omit a node $v$ if $E[H(v)]$ is less than the lower 1-sigma control\footnote{Mean minus standard deviation} of $\sum_{v \in V}H(v)/N$. 

As discussed in the prior works, we used to execute Monte Carlo simulation for 10000 to 20000 times for an accurate estimation. However, in the first step of H-Greedy, 1000 to 2000 simulations are sufficient. This is because the estimates are not necessary to be very accurate as they are merely used to narrow the candidate set of seed nodes. With a smaller set of candidate seed nodes the time consumed in the seeding process can be significantly reduced as about a half of the nodes will not be considered to be seed nodes. A shown later, the performance of H-Greedy is closed to Greedy which has a provable performance guarantee. We will further discuss the feasibility of H-Greedy in the next section.

\section{Experiment}
\label{sec:exp}
In this section, we show the results of the conducted experiments. In order to evaluate the proposed adaptive seeding strategies, we examine the performance of our strategies from the following aspects: (a) the influence spread comparing to non-adaptive seeding strategies; (b) the effectiveness and efficiency of the heuristic strategy.

\begin{figure*}[t]
\captionsetup{justification=centering}
\subfloat[$\mathscr{F}^1$ with Prob$\text{[}X_u=1\text{]}=1$ \newline on Hep ]{\label{fig:hep_1_1}\includegraphics[width=0.24\textwidth]{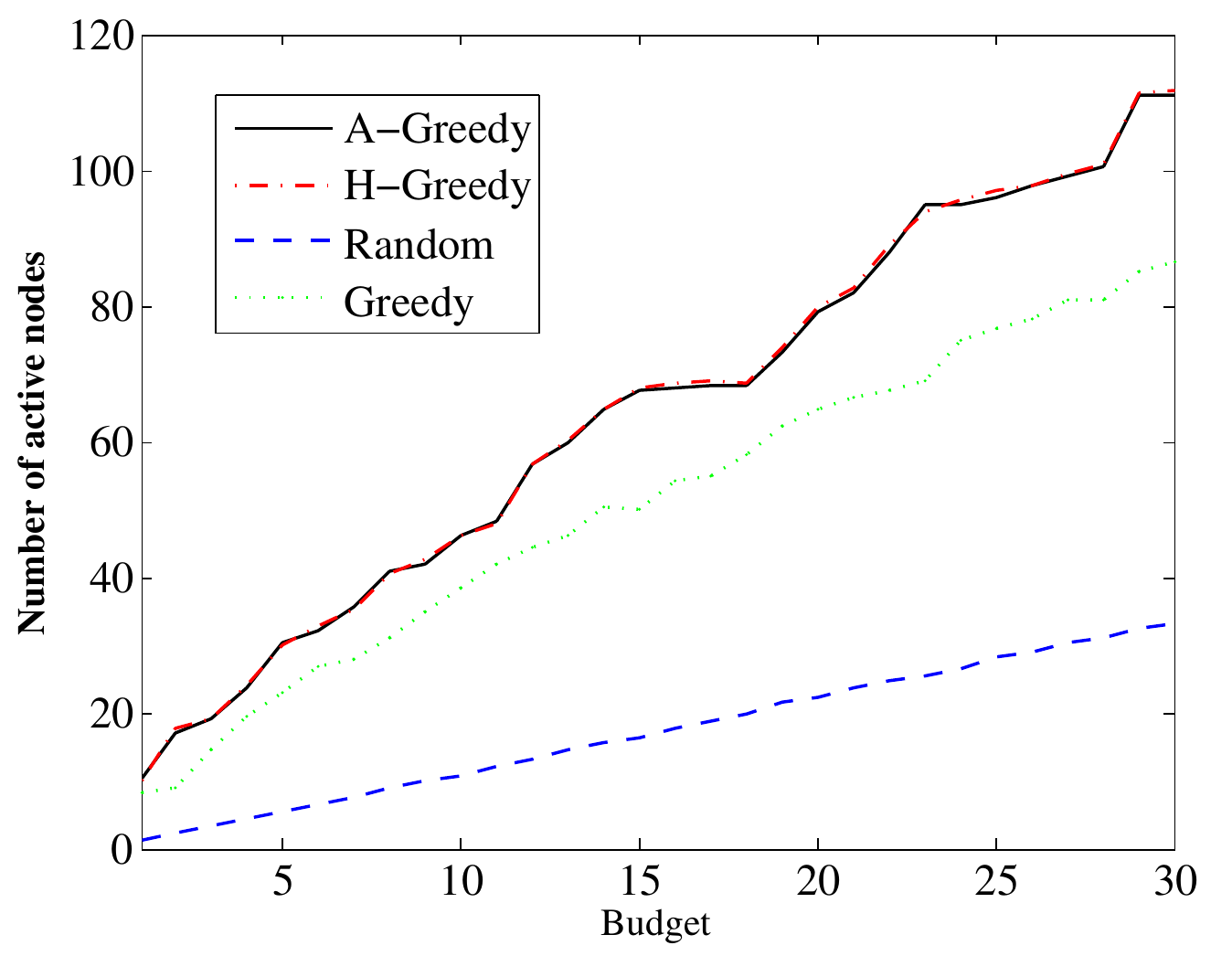}} \hspace{0mm}
\subfloat[$\mathscr{F}^3$ with Prob$\text{[}X_u=1\text{]}=1$ \newline on Hep ]{\label{fig:hep_1_5}\includegraphics[width=0.24\textwidth]{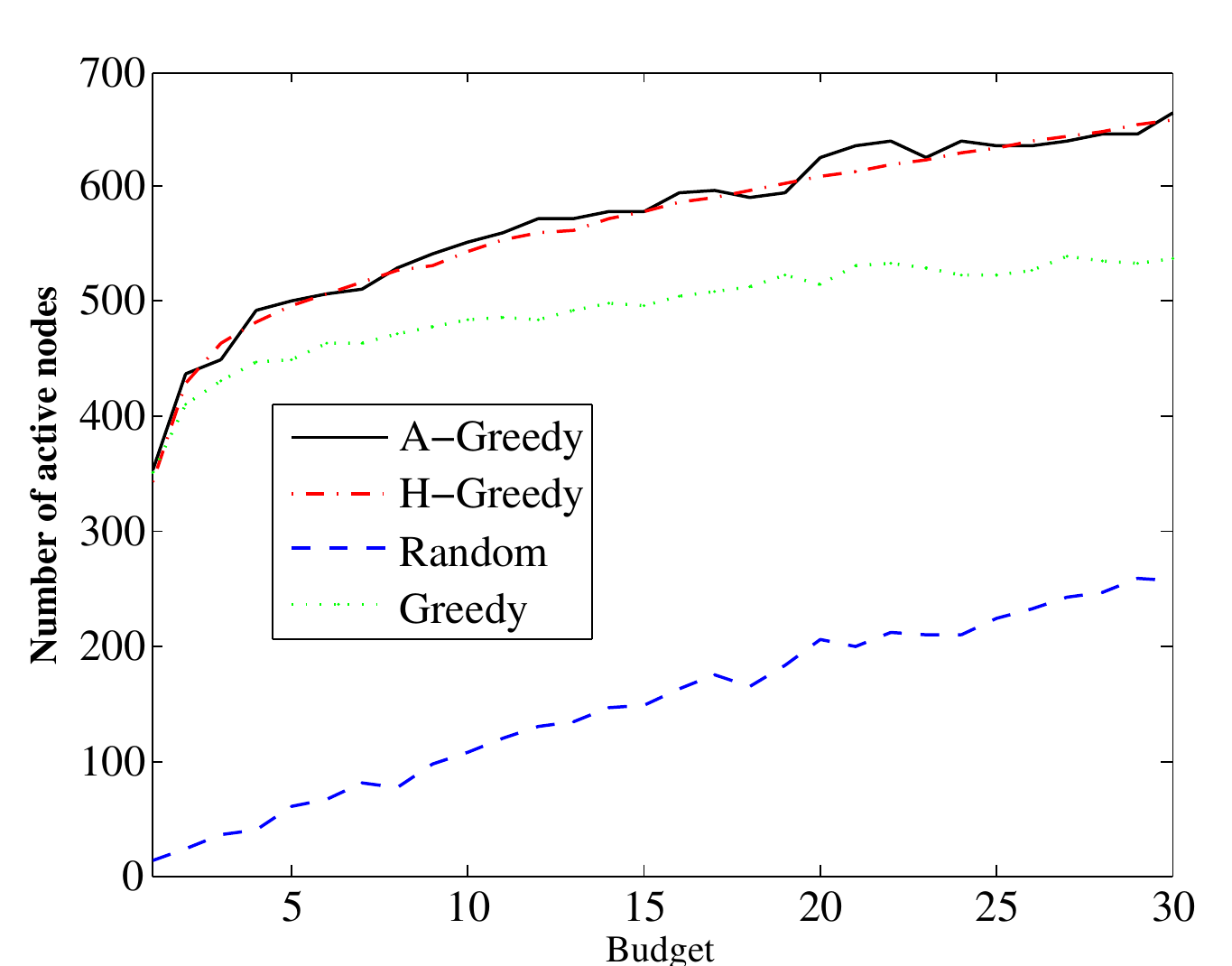}} \hspace{0mm}
\subfloat[$\mathscr{F}^1$ with Prob$\text{[}X_u=1\text{]}=0.5$ \newline on Hep]{\label{fig:hep_5_1}\includegraphics[width=0.24\textwidth]{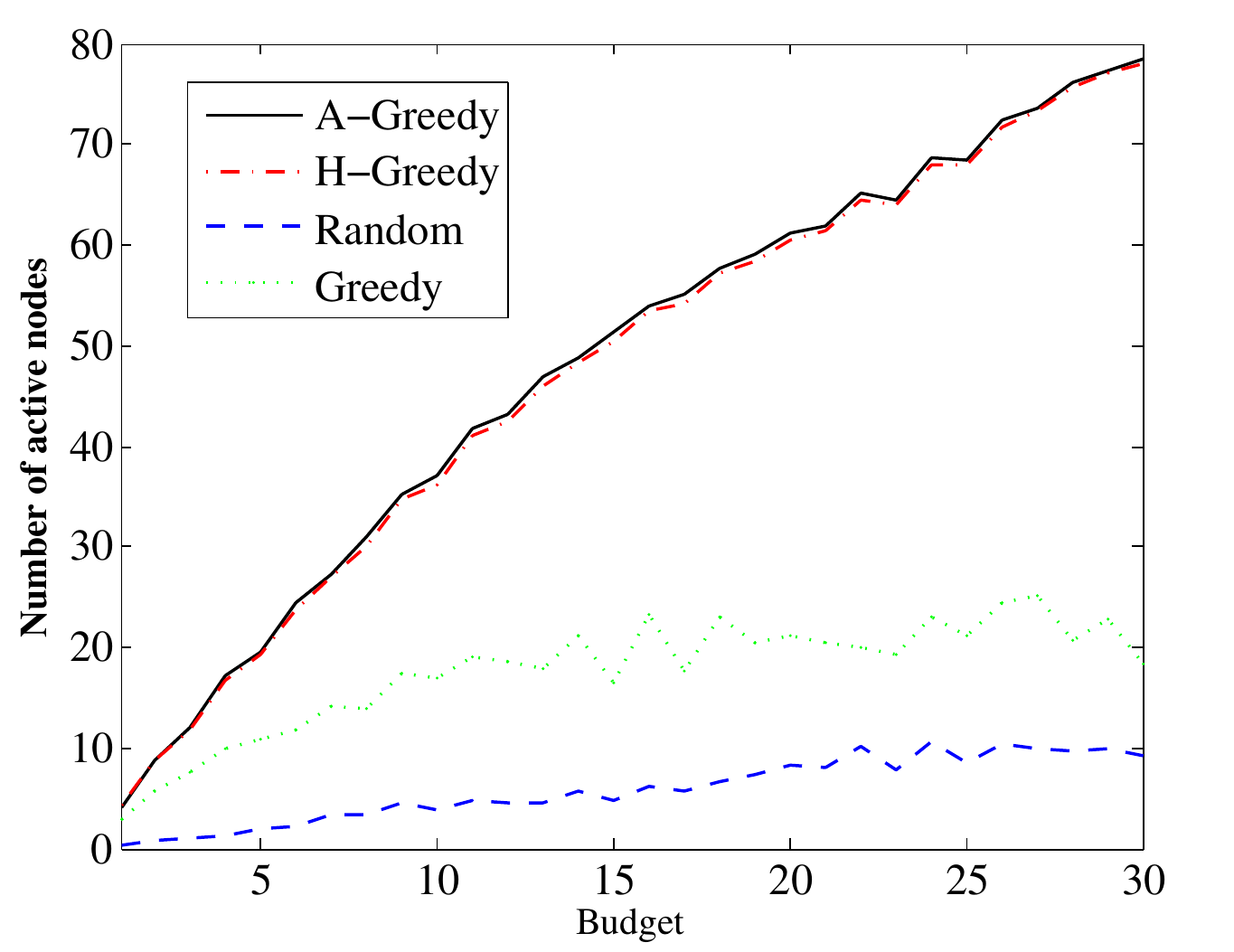}} \hspace{0mm}
\subfloat[$\mathscr{F}^2$ with Prob$\text{[}X_u=1\text{]}=0.5$ on Hep]{\label{fig:hep_5_3}\includegraphics[width=0.24\textwidth]{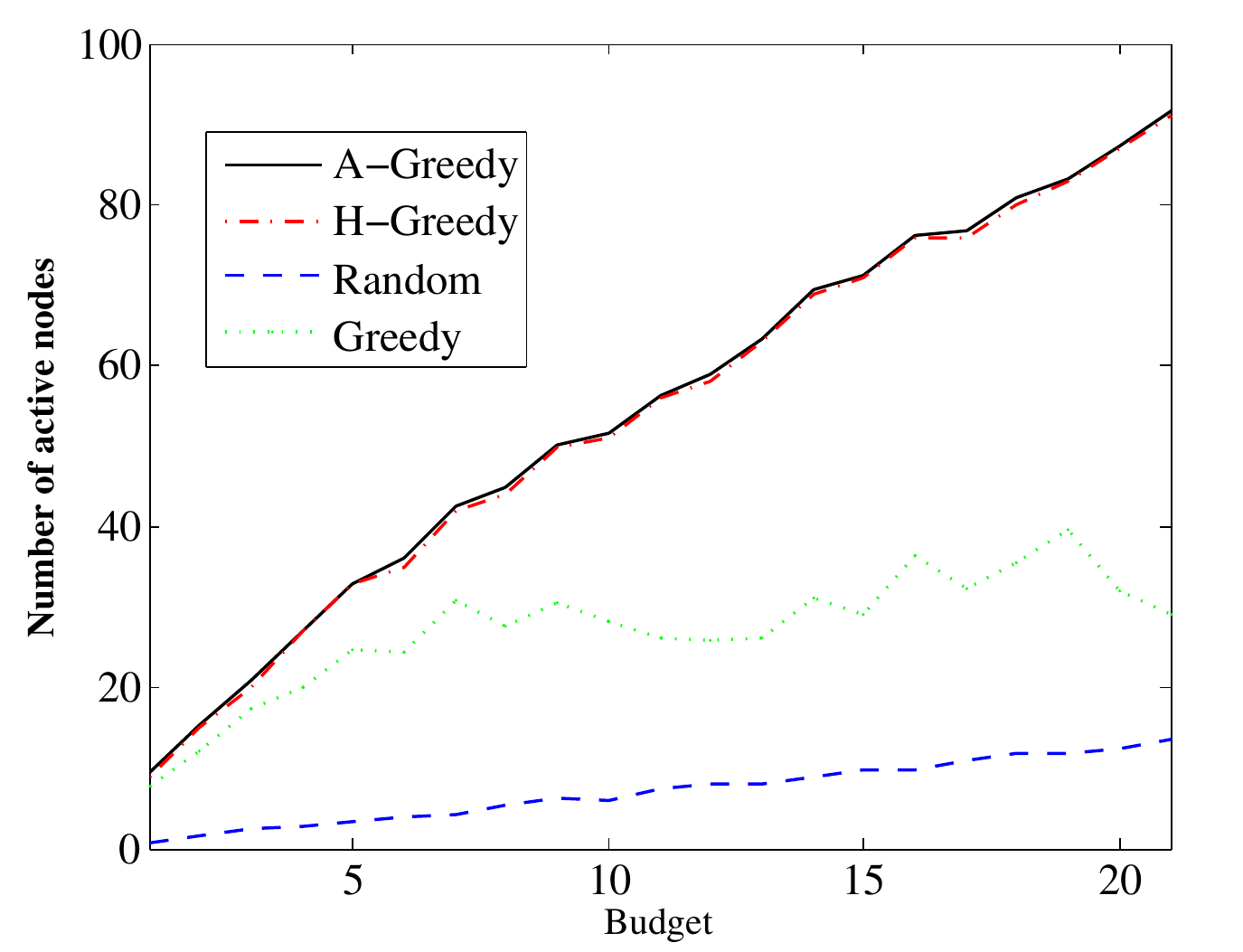}} \hspace{0mm}

\subfloat[$\mathscr{F}^1$ with Prob$\text{[}X_u=1\text{]}=1$  on PL]{\label{fig:pl_1}\includegraphics[width=0.32\textwidth]{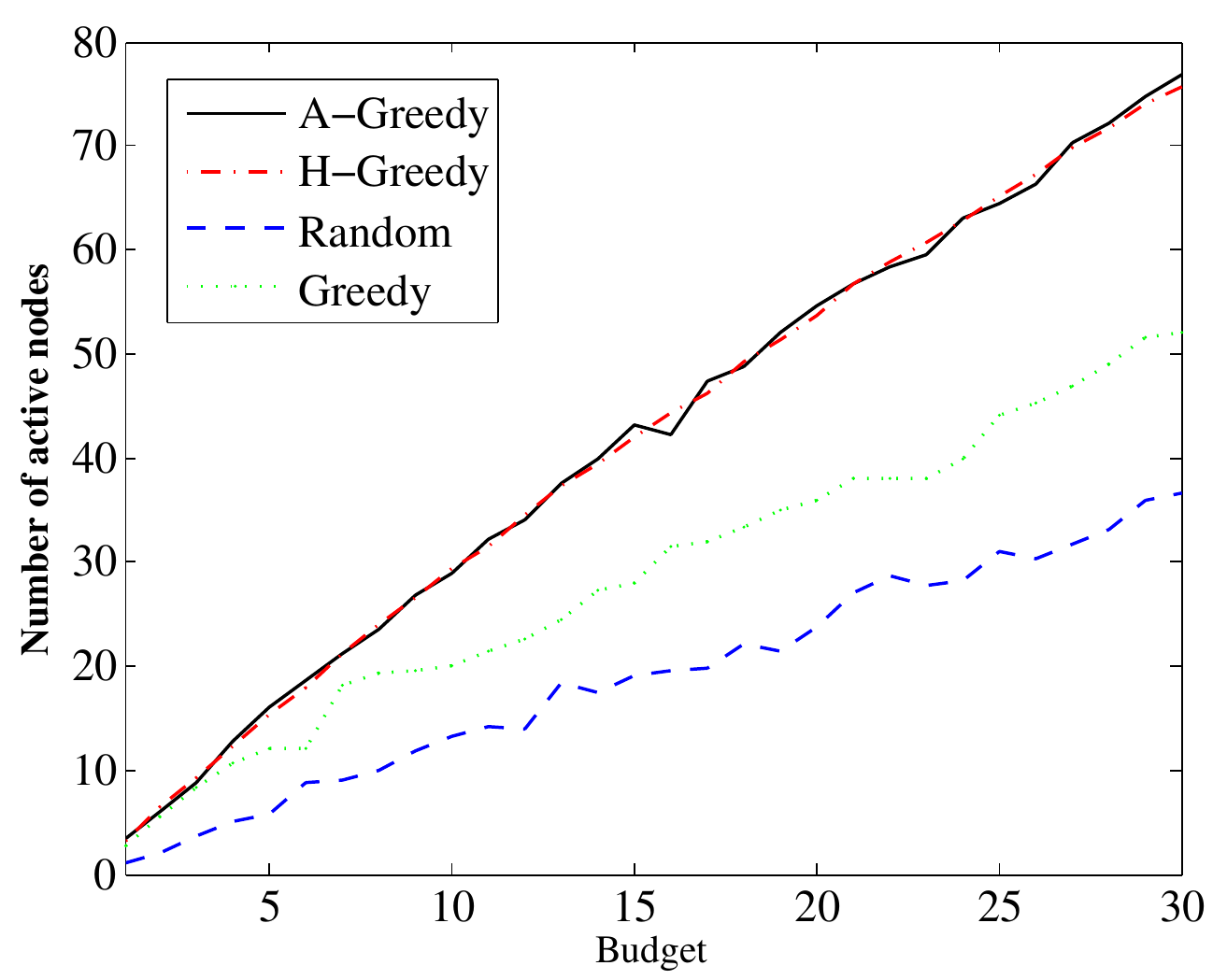}} \hspace{0mm}
\subfloat[$\mathscr{F}^3$ with Prob$\text{[}X_u=1\text{]}=0.5$ on PL]{\label{fig:pl_2}\includegraphics[width=0.32\textwidth]{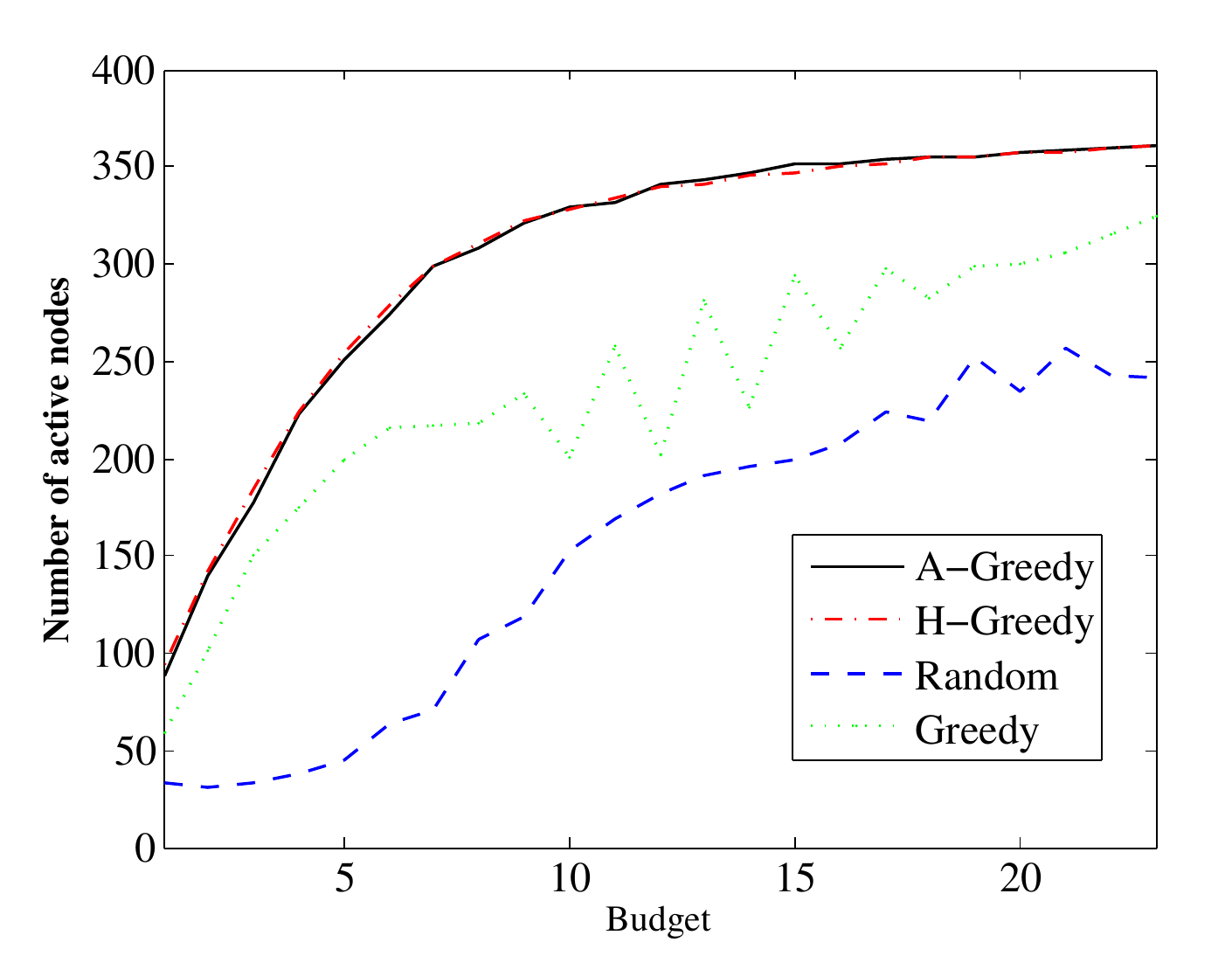}} \hspace{0mm}
\subfloat[$\mathscr{F}^1$ with Prob$\text{[}X_u=1\text{]}=1$ on Wiki]{\label{fig:wiki_1}\includegraphics[width=0.32\textwidth]{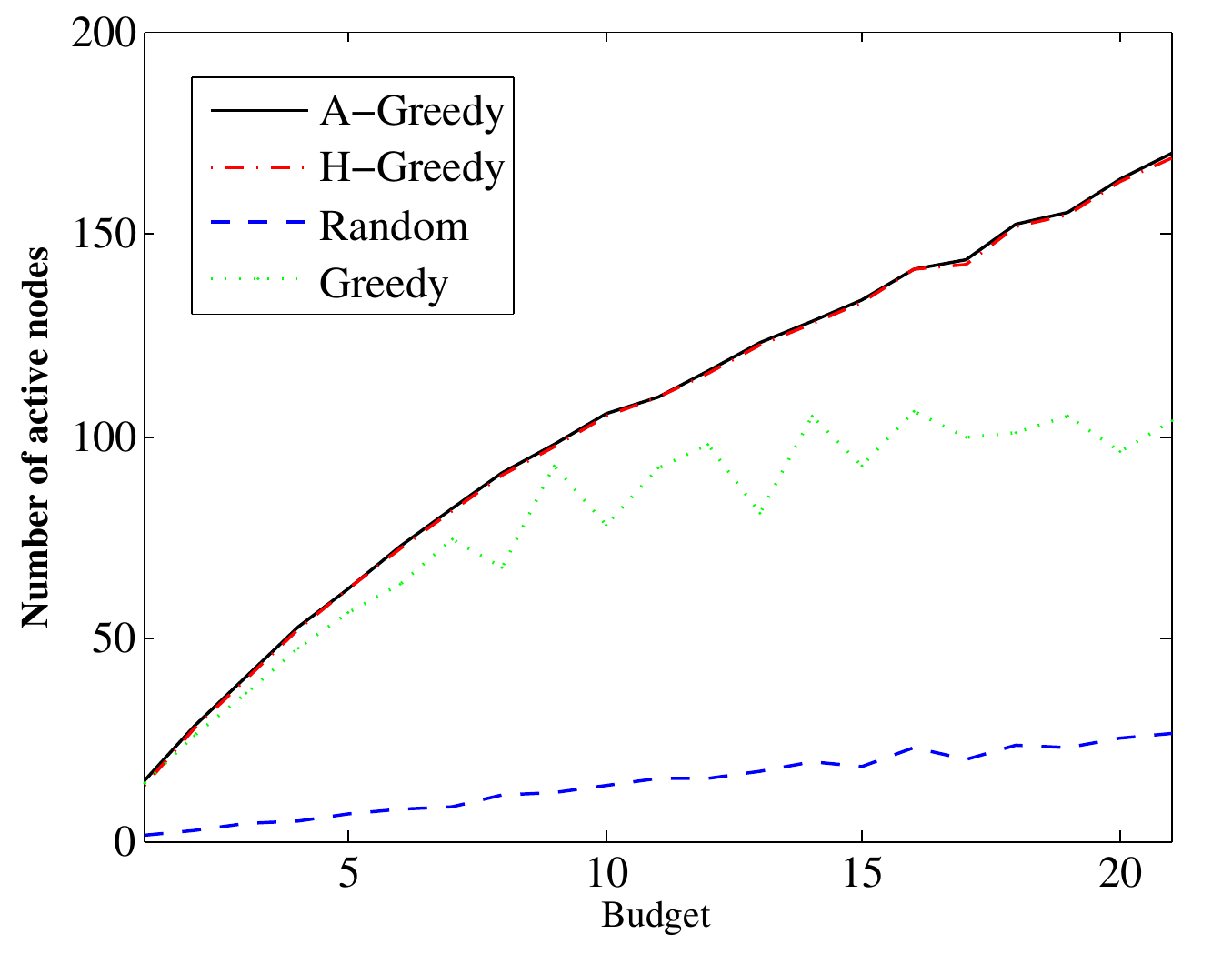}} \hspace{0mm}

\caption{\small Comparing A-Greedy with Greedy. In all seven graphs, the y-axis and x-axis denote the number of active nodes and the budget, respectively. Each graph gives
four curves plotting the influence spread under four seeding strategies, respectively. } 
\vspace{-2mm} \normalsize
\label{fig:a-greedy}
\end{figure*}

\subsection{Experiment Setup}
\label{subsec:exp_set}
In order to fairly compare the performance of our seeding strategies to that of the existing approaches, we employ two real-world social networks, which have been widely used in the prior works, and a synthetic power-law network which is able to capture the key features of real social networks. The propagation probabilities are generated from three distributions, as shown later.

\textbf{Network structure.} The first real-world social network, denoted by \textbf{Hep}, is an academic collaboration from co-authorships in physics. Hep is compiled from the "High Energy Physics - Theory" section of the e-print arXiv\footnote{http://www.arXiv.org} and has been widely used in the prior works (e.g. \cite{kempe2003maximizing,chen2009efficient,long2011minimizing} and \cite{zhang2014minimizing}). For each pair of authors who has a co-authorship, we have two directed edges from each one to the other. The resulting network has about 15,000 nodes and 58,000 directed edges. The second dataset, denoted by \textbf{Wiki}, contains the Wikipedia voting data \cite{leskovec2009wikipedia} from the inception of Wikipedia. Nodes in this network represent wikipedia users and a directed edge from node $u$ to node $v$ represents that user $u$ votes on user $v$, which mean $v$ has influence over $u$. Thus, if there is an edge from $u$ to $v$ in the original data, we add an edge from $v$ to $u$ in Wiki. Wiki has about 8,600 nodes and 103,000 directed edges and has been studied in \cite{chen2011influence}, \cite{li2014influence} and \cite{li2013rumor}. The last dataset is a synthetic power-law network generated by \cite{cowendigg}. The synthetic power-law network selected in this paper, denoted by \textbf{PL}, includes 2500 nodes and 26,000 directed edges. Power-law degree distribution has been shown to be one of the most important characteristics of social networks \cite{clauset2009power}. We use PL dataset to evaluate the performance of the proposed seeding strategies in general social networks.

\textbf{Propagation probability.} The three distributions $\mathscr{F}^i (i=1,2,3)$ of the propagation probability $X_e$ of an edge $e$ are shown as follows. In $\mathscr{F}^1$, the propagation probability are fixed as 0.01, which is the same as that in \cite{kempe2003maximizing}. $\mathscr{F}^2$ is an exponential distributions with a mean of $0.01$. $\mathscr{F}^3$ is a uniform discrete distribution over $\{0.1,0,01,0,001\}$. 

\textbf{Activation probability.} We assign a uniform activation probability on each node $u$, choosing $\text{Prob}[X_u=1]$ to be 1 and 0.5.

Note that it reduces to the classic IC model if $\mathscr{F}^1$ and $X_u=1$. 

\textbf{Seeding strategies.}  The tested seeding strategies are shown as follows.
\begin{enumerate}
\item {\textbf{Greedy}.} This is the state-of-art non-adaptive seeding approach proposed in \cite{kempe2003maximizing}. In Greedy, the nodes are selected by a hill-climbing algorithm before the diffusion process. When implementing Greedy in the DIC model, we fixed the propagation probability by its mean as the real propagation probabilities are unavailable in the DIC model before the start of diffusion process. For each estimation, 10000 simulations are run to obtain an accurate estimate.
\item{\textbf{A-Greedy}.} This is the greedy adaptive seeding strategy proposed in Sec. \ref{sec:greedy}. Similarly, 10000 simulations are run to obtain an accurate estimate of $\sum_{x \in C_G(y_{i-1})} \text{Prob}[x|y_{i-1}] \cdot \textit{N}_{x}^G(A \cup {v^{*}})$ in line 11 of Algorithm  \ref{alg:a^*}.
\item{\textbf{H-Greedy}.} This is the heuristic adaptive seeding strategy proposed in Sec. \ref{sec:heuristic}. In the first step of H-Greedy, 2000 simulations are run to obtain the estimates mentioned in Sec. \ref{sec:heuristic}. 
\item{\textbf{Random}.} This is a baseline seeding strategy where the seed nodes are selected randomly. 
\end{enumerate}
As discussed in the prior works, the seeding strategies based on the shortest-path and high-degree perform worst than Greedy. Thus we ignore other measures. In our experiment, the budget is chosen from 10 to 30.

\begin{figure*}[t]
\subfloat[$\mathscr{F}^1$ with Prob$\text{[}X_u=1\text{]}=1$]{\label{fig:path_1}\includegraphics[width=0.32\textwidth]{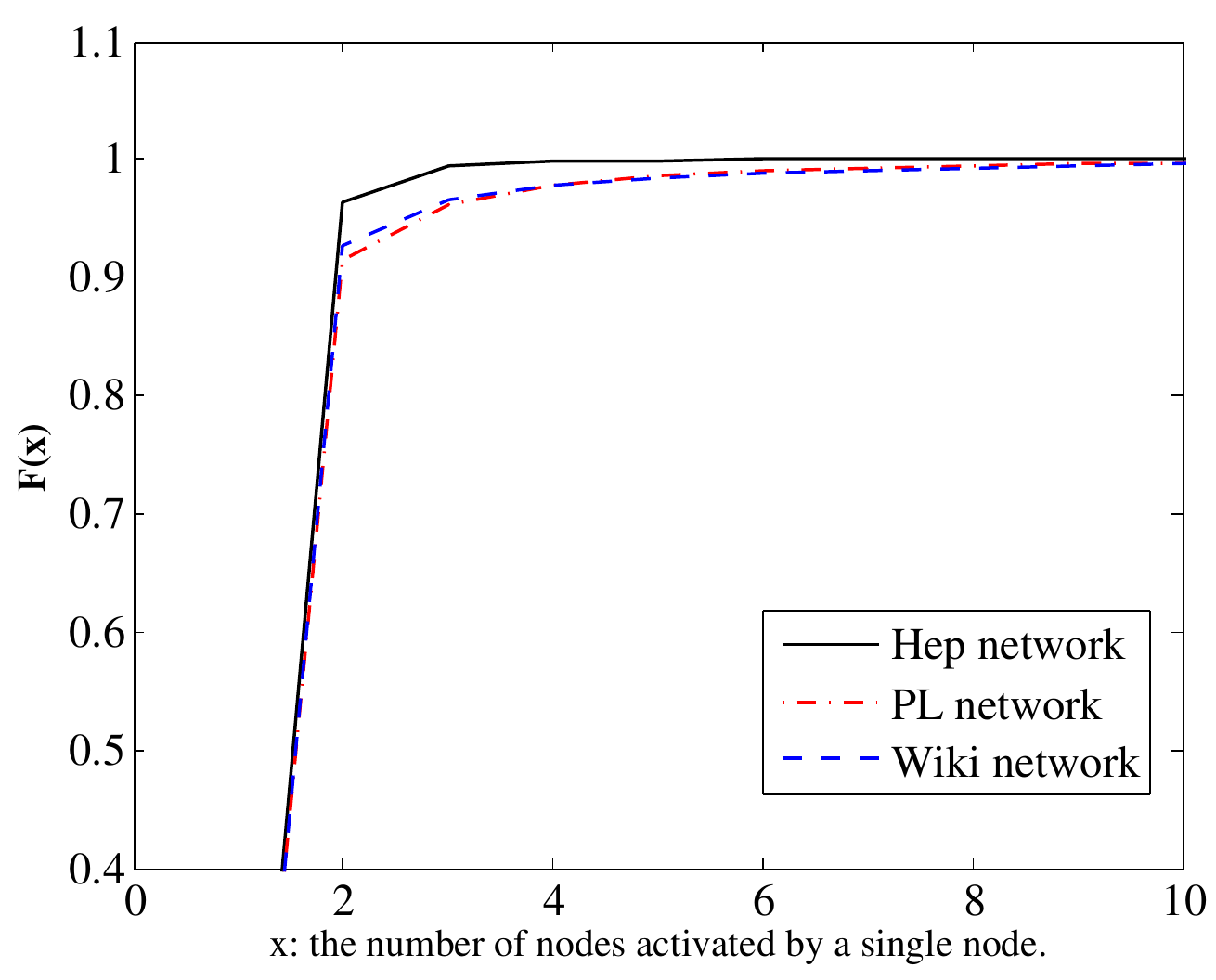}} \hspace{0mm}
\subfloat[$\mathscr{F}^2$ with Prob$\text{[}X_u=1\text{]}=1$]{\label{fig:path_2}\includegraphics[width=0.32\textwidth]{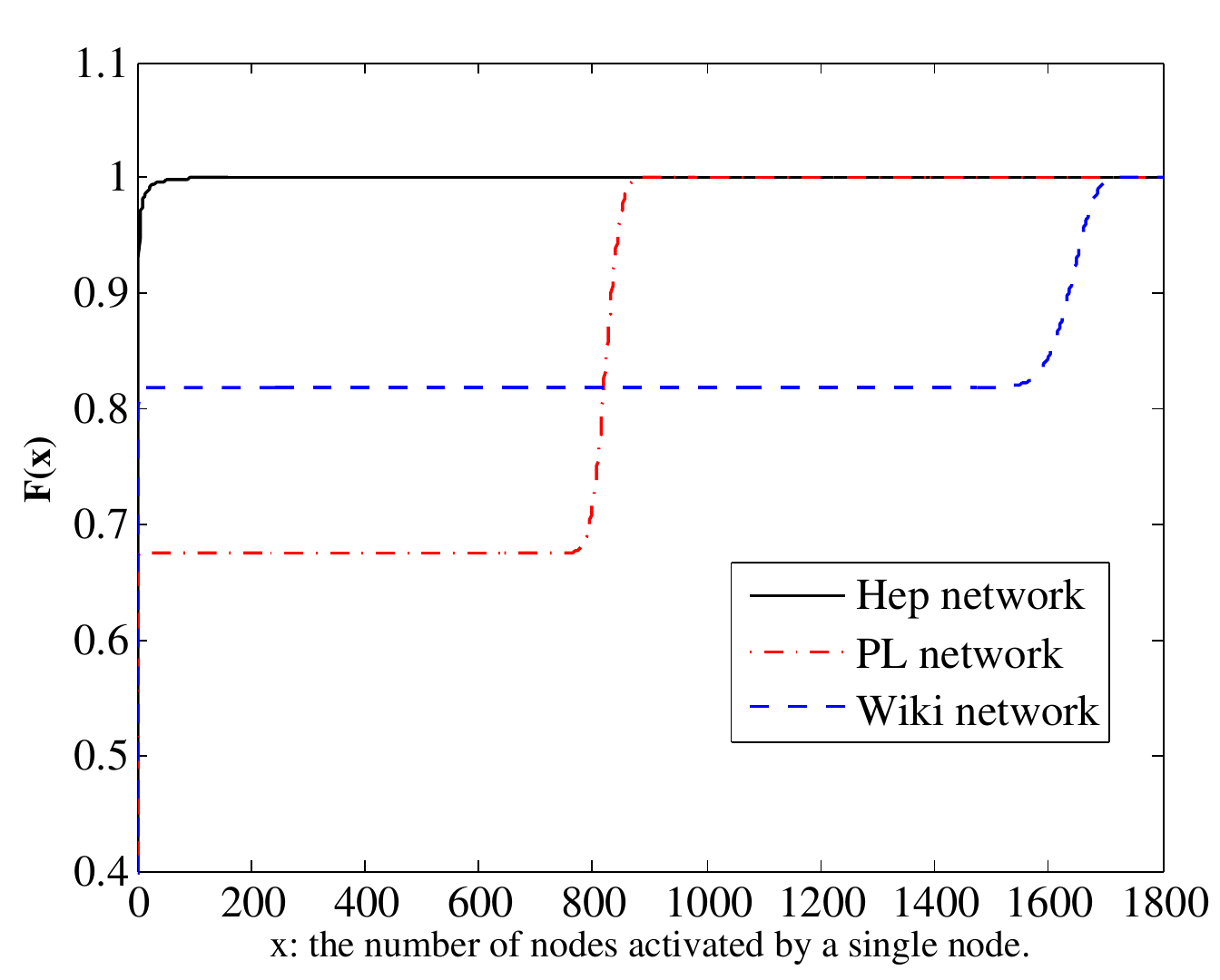}} \hspace{0mm}
\subfloat[$\mathscr{F}^3$ with Prob$\text{[}X_u=1\text{]}=1$]{\label{fig:path_3}\includegraphics[width=0.32\textwidth]{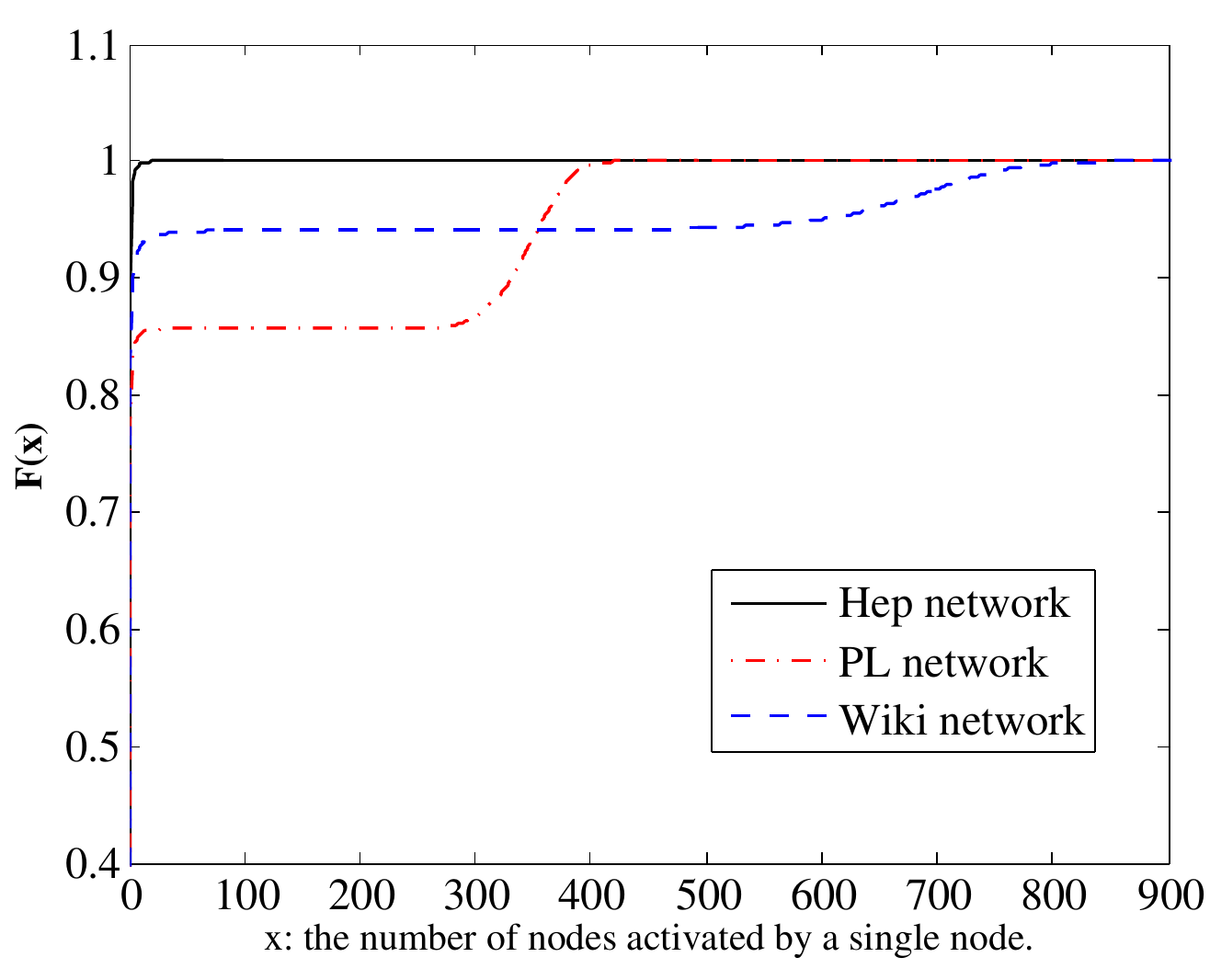}} \hspace{0mm}

\caption{Distributions of $E(H(v))$ of the three datasets under different propagation probability.} 
\vspace{-2mm} \normalsize
\label{fig:pathtest}
\end{figure*}

\begin{figure*}[t]
\subfloat[$\mathscr{F}^2$ with Prob$\text{[}X_u=1\text{]}=1$ on PL]{\label{fig:pl_1_4_h}\includegraphics[width=0.45\textwidth]{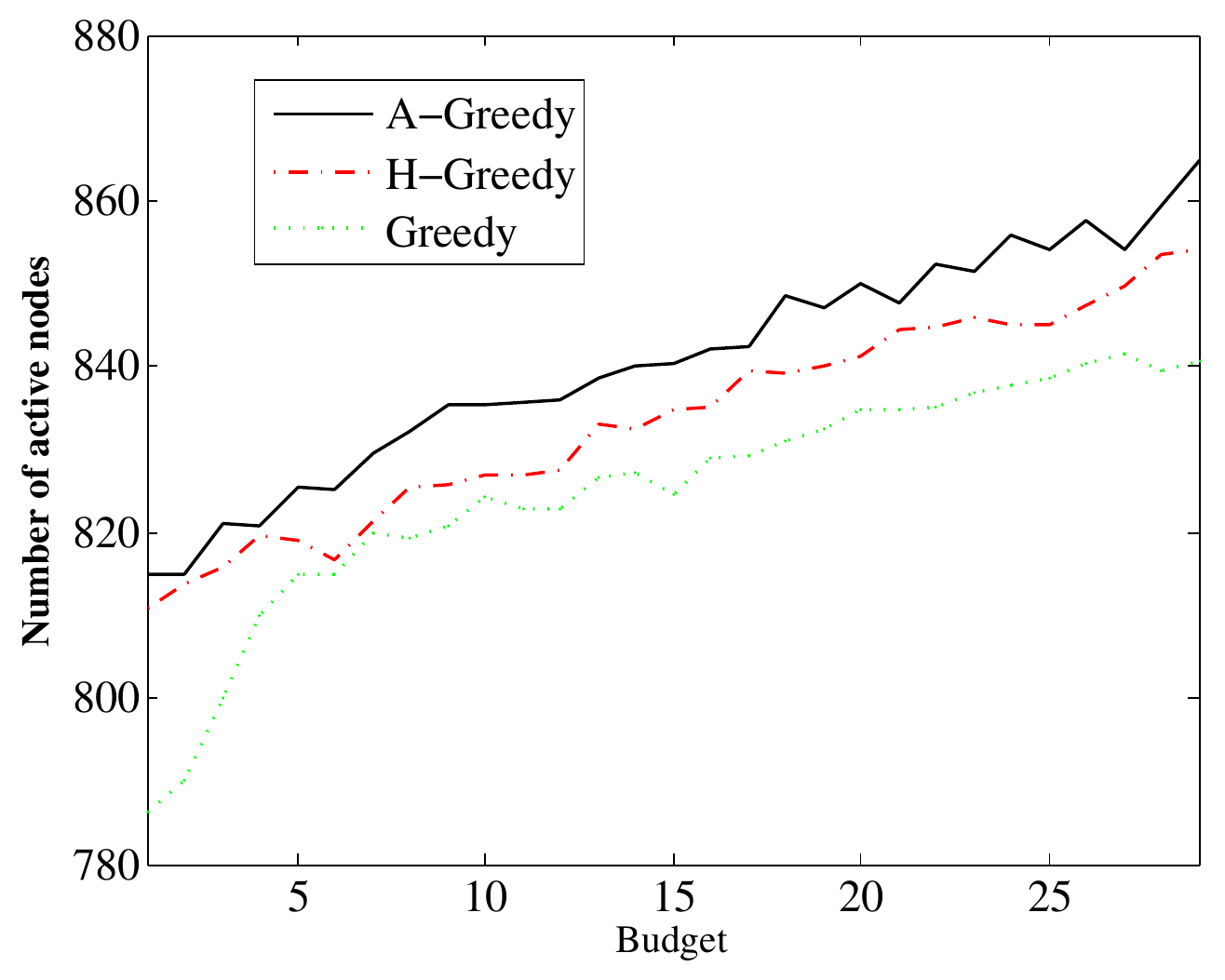}} \hspace{10mm}
\subfloat[$\mathscr{F}^3$ with Prob$\text{[}X_u=1\text{]}=1$ on PL]{\label{fig:pl_5_5_h}\includegraphics[width=0.45\textwidth]{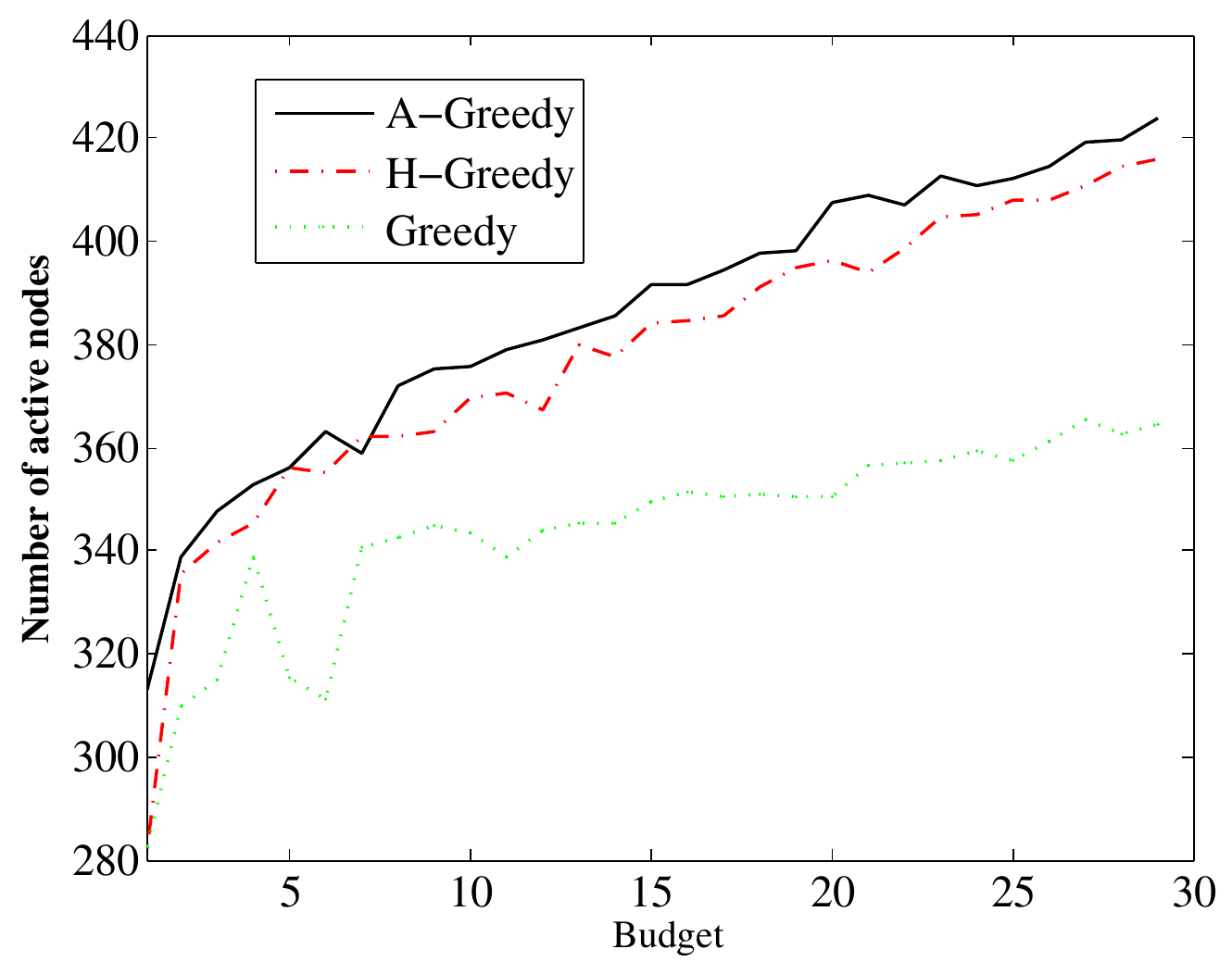}} \hspace{0mm}

\caption{\small Comparing H-Greedy with A-Greedy. The y-axis and x-axis denote the number of active nodes and the budget, respectively. Each graph gives
three curves plotting the influence spread under A-Greedy, H-Greedy and Greedy, respectively. We ignore Random here as it performs poorly.} 
\vspace{-2mm} \normalsize
\label{fig:h-greedy}
\end{figure*}

\subsection{Results}
\label{subsec:results}
First, we discuss the performance of A-Greedy. As shown in Fig. \ref{fig:a-greedy}, A-Greedy outperforms Greedy under all circumstances. This is intuitive as the adaptive seeding strategies are able to utilize the outcomes of the past rounds. As shown in Fig. \ref{fig:hep_1_1},  A-Greedy is superior to Greedy by a notable margin even in the classic IC model. For the DIC model where the diffusion process is of more uncertainness, the results herein verify the significant advantages of the adaptive seeding strategy over the non-adaptive seeding strategy. We discuss the results in detail in the following. 

For the Hep network, as shown in Fig. \ref{fig:hep_1_1}, A-Greedy is 125\% better than Greedy in the classic IC model under $\mathscr{F}^1$ with $\text{Prob}[X_u=1]=1$. While the uncertainness of the diffusion process getting increased, namely by changing $\text{Prob}[X_u=1]$ to 0.5 as shown in Fig. \ref{fig:hep_5_1}, A-Greedy becomes 320\% better than Greedy. As shown in Figs. \ref{fig:pl_1} \ref{fig:pl_2} and \ref{fig:wiki_1}, for PL and Wiki network, we have the similar result. For example, for the PL network under $\mathscr{F}^1$ with $\text{Prob}[X_u=1]=0.5$, one seed node results about 2.5 active nodes under A-Greedy while in average 1.67 nodes can be activated by a single seed node under Greedy. Another important observation is that the curves generated by Greedy become less stable in the DIC model, which implies that to reach the same level of accuracy Greedy requires more number of simulations than A-Greedy does.

Now let us discuss the performance of the proposed heuristic seeding strategy H-Greedy. Fig. \ref{fig:pathtest} shows the distribution of $E[H(v)]$ drew from the dataset by simulation. In Fig. \ref{fig:path_1}, 90 \% of the nodes cannot activate more than 2 nodes, while in Figs. \ref{fig:path_2} and \ref{fig:path_3}, we can see that there is a significant gap between the strength of influential nodes and that of other nodes. For example, as shown in Fig. \ref{fig:path_2}, 24 percent of the nodes in Wiki can activate more than 1600 nodes while 82 percent of them can hardly activate more than 50 nodes. For PL dataset in the same setting, about 30 percent of the nodes could bring 780 active nodes while 68 percent of them only results less than 100 active nodes. Admitting that the difference of $E[H(v)]$ between two nodes would decrease along with the seeding process due to the submodularity, the nodes with small $E[H(v)]$ are not likely to be a seed node as the gap is too large and we only have a small budget compared to the population of users. Thus, 1-sigma control on $E[H(v)]$ is a safe bound such that we will not miss any influential nodes. As shown in Fig. \ref{fig:a-greedy}, under all the circumstances the performance of H-Greedy is almost the same as that of A-Greedy. This is because in those settings H-Greedy can hardly eliminate any nodes as the distributions of $E[H(v)]$ are like Fig. \ref{fig:path_1}. Thus, H-Greedy is identical to A-Greedy in those cases. However, for the cases where the distribution of $E[H(v)]$ has a pattern like Figs \ref{fig:path_2} or \ref{fig:path_3}, H-Greedy would be an effective and efficient strategy. In these cases, H-Greedy could rule out more than a half of the nodes from the candidate seed nodes and thus more than 20\% time consumed in the seeding process could be saved as shown in Fig. \ref{fig:time}. Furthermore, H-Greedy performs slightly worse than A-Greedy but still better than Greedy, as shown in Fig. \ref{fig:pl_1_4_h} and \ref{fig:pl_5_5_h}.





\begin{table}[t]
\centering
\small
{\begin{tabular}{ p{4.8cm}   p{1.42cm}  p{1.4cm} }

\hline
Parameter Setting & H-Greedy \newline (ms)& A-Greedy \newline (ms)\\ 
\hline 
$\mathscr{F}^2$ \& Prob$\text{[}X_u=1\text{]}=1$ on PL & 14977& 51485 \\
$\mathscr{F}^2$ \& Prob$\text{[}X_u=1\text{]}=1$ on Wiki & 87412& 268499 \\ 
$\mathscr{F}^3$ \& Prob$\text{[}X_u=1\text{]}=1$ on PL & 981& 11931 \\
$\mathscr{F}^3$ \& Prob$\text{[}X_u=1\text{]}=1$ on Wiki & 31247& 44625 \\
\hline
\end{tabular}}
\caption{Scalability of H-Greedy. The four cases are shown in the first column. The second and third column shows the average time consumed in selecting one seed node under H-Greedy and A-Greedy.}
\label{table:time}
\end{table}

\section{Conclusion and Future work}
\label{sec:conclusion}
In this paper we have considered the problem that how to maximize the spread of influence in dynamic social networks. The proposed DIC model is able to capture the dynamic aspects of a real social network and the uncertainness of the diffusion process. In the DIC model, a certain node can be seeded for more than one time and the propagation probability between two users varies following a certain distribution. Based on the DIC model, we formulate the adaptive seeding strategies by introducing the concept of seeding pattern. The pattern $A^{*}$ constructed in Sec. \ref{sec:model} shows the optimal method to determining how many budgets shall we utilize in each seeding step. Combining the optimal pattern with the natural hill-climbing algorithm, we present the A-Greedy seeding strategy and show that A-Greedy has a performance ratio of $(1-1/e)$. By the observation that the influential nodes are much more powerful than other nodes in a social network, we further design an simple heuristic adaptive seeding strategy H-Greedy based on A-Greedy. The experimental results herein demonstrate the superiority of the adaptive seeding strategies to prior approaches.

The future work of this topic consists of several aspects. As we can see, H-Greedy is a simple heuristic strategy and it is not effective for all the settings of DIC model. Thus, we plan to design better heuristic adaptive seeding strategies that are able to deal with general social networks. We note that the technique in \cite{chen2010scalable} is possibly applicable to the adaptive seeding framework and we leave this part as future work. Another aspect of the future work is to design adaptive seeding strategies which are able to meet the round limit. In real applications, we may only care about the spread influence within a certain number of rounds. In this case, the analysis of the adaptive seeding strategies becomes intricate. On the one hand as shown by pattern $A^{*}$ we try to utilize the budgets as late as possible in order to obtain more information while on the other hand delaying a seeding step leads us to lost a diffusion round as we have round limit. One can easily check that with a round limit our objective function is not submodular anymore, which renders it more hard to find a greedy algorithm with a provable performance guarantee.

%




\ifCLASSOPTIONcaptionsoff
  \newpage
\fi




\bibliographystyle{IEEEtran}
\bibliography{sigproc}
%



%

\begin{IEEEbiography}[{\includegraphics[width=1in,clip,keepaspectratio]{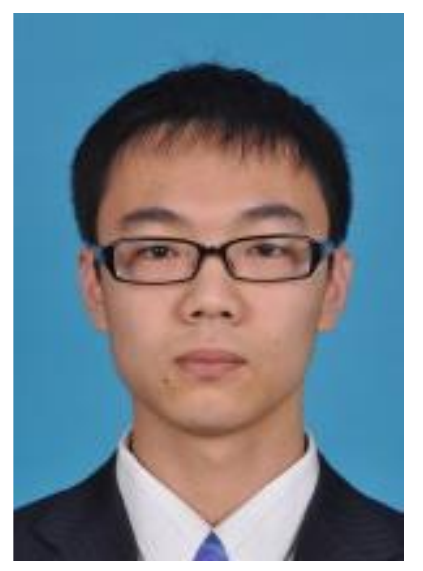}}]{Guangmo Tong} is a Ph.D candidate in the Department of Computer Science at the University of Texas at Dallas. He received his BS degree in Mathematics and Applied Mathematics from Beijing Institute of Technology in July 2013. His research interests include real-time and embedded systems and social networks. He is a student member of the IEEE.
\end{IEEEbiography}

\begin{IEEEbiography}[{\includegraphics[width=1in,clip,keepaspectratio]{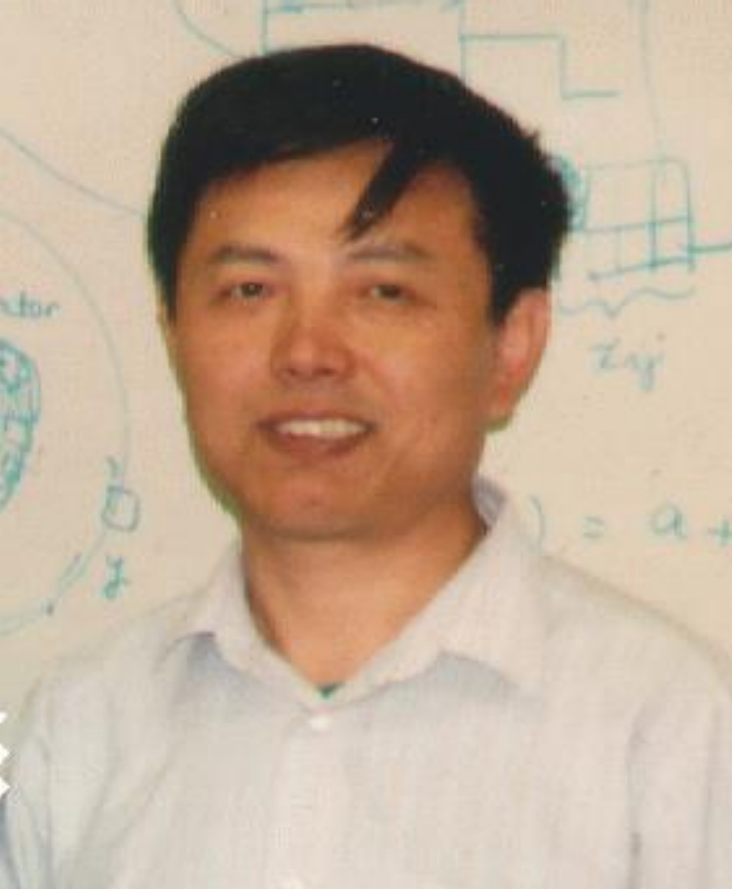}}]{Ding-Zhu Du} received the M.S. degree from the Chinese Academy of Sciences in 1982 and the Ph.D. degree from the University of California at Santa Barbara in 1985, under the supervision of Professor Ronald V. Book. Before settling at the University of Texas at Dallas, he worked as a professor in the Department of Computer Science and Engineering, University of Minnesota. He also worked at the Mathematical Sciences Research Institute, Berkeley, for one year, in the Department of Mathematics, Massachusetts Institute of Technology, for one year, and in the Department of Computer Science, Princeton University, for one and a half years. He is the editor-in-chief of the Journal of Combinatorial Optimization and is also on the editorial boards for several other journals. Forty Ph.D. students have graduated under his supervision. He is a member of the IEEE
\end{IEEEbiography}





\end{document}